\newtheorem{prop}{Proposition}
\newtheorem{thm}[prop]{Theorem}
\theoremstyle{definition}
\def\theequation{\thesection.\arabic{equation}}
\title{Hirota bilinear approach to GUE, NLS, and Painlev\'e IV}
\author{Saburo Kakei\\[2mm]
{\small Department of Mathematics, Rikkyo University,}\\
{\small 3-34-1 Nishi-ikebukuro, Toshima-ku, Tokyo 171-8501, Japan}}
\date{}
\begin{document}
\maketitle
\begin{abstract}
Tracy and Widom showed that the level spacing function of the 
Gaussian unitary ensemble is related to 
a particular solution of the fourth Painlev\'e equation. 
We reconsider this problem from the viewpoint of Hirota's bilinear method 
in soliton theory and present another proof.
We also consider the asymptotic behavior of the level spacing 
function as $s\to\infty$, and its relation to the 
``Clarkson-McLeod solution'' to the Painlev\'e IV equation.
\end{abstract}
\hfill
\textit{To the memory of Professor Ryogo Hirota}

\section{Introduction}
Random matrix theory has its origins in mathematical statistics and 
nuclear physics \cite{MehtaBook}, and has been applied to many fields 
such as wireless communications \cite{TulinoVerdu}. 
In the paper \cite{TracyWidom1}, Tracy and Widom introduced a 
limiting distribution function associated with the largest eigenvalue
of the Gaussian unitary ensemble (GUE). The Tracy-Widom distribution 
has been applied to many branch of mathematics such as combinatorics
and statistics. (For review, see \cite{TracyWidomICM}.) 
Furthermore, recent experimental studies reveal that the Tracy-Widom distribution 
appears in various physical systems 
such as slow combustion of paper \cite{MMMT}, 
turbulent liquid crystals \cite{TakeuchiSano}, 
fiber laser systems \cite{FPNFD}, 
evaporating drops of colloidal suspensions \cite{YLSBDY}.
(For a survey of recent experimental studies, see \cite{TakeuchiReview}.) 

The GUE-Tracy-Widom distribution function $F_2(s)$ is given by
\begin{equation}
F_2(s) = \exp\left[-\int_s^{\infty}(x-s)q^2(x)dx\right],
\end{equation}
where $q(x)$ is the unique solution to a special case of the
Painlev\'e II equation (``Hastings-McLeod solution'' \cite{HM}) 
\begin{equation}
\frac{d^2 q}{ds^2}= s\frac{dq}{ds}+2q^3
\end{equation}
satisfying the boundary condition
$q(s) \sim \mathrm{Ai}(s)$ as $s\to\infty$. 
The function $F_2(s)$ is a scaling limit of 
the distribution function of the largest eigenvalue 
of the Gaussian unitary ensemble (GUE). 
For the ensemble of $N\times N$ random Hermitian matrices, 
joint distribution of eigenvalues is given by \cite{MehtaBook}
\begin{equation}
\mathbb{P}_{2,N}(x_1,x_2,\ldots,x_N)=
\frac{1}{Z_N}
\prod_{1\leq i<j\leq N}\left|x_i-x_j\right|^2\prod_{i=1}^N
e^{-x_i^2},
\end{equation}
where the normalization constant $1/Z_N$ is given by
\begin{align}
Z_N &=
\int_{-\infty}^{\infty}\cdots\int_{-\infty}^{\infty}
\prod_{1\leq i<j\leq N}\left|x_i-x_j\right|^2\prod_{i=1}^N
e^{-x_i^2}dx_1\cdots dx_N
\nonumber\\
&= 2^{-N(N-1)/2}\pi ^{N/2}\prod_{j=1}^N j!.
\label{normalization}
\end{align}
We will give a bilinear-theoretical proof of the identity
\eqref{normalization} in Appendix A.

The distribution function of the largest eigenvalue 
$\lambda_{\mathrm{max}}$ is then expressed as
\begin{align}
F_{2,N}(s)  &:=\mathbb{P}_{2,N}(\lambda_{\mathrm{max}}\leq s)
=\mathop{\idotsint}_{x_1\leq \cdots\leq x_N\leq s}\mathbb{P}_{2,N}(x_1,x_2,\ldots,x_N)
dx_1\cdots dx_N
\nonumber\\
&= \frac{1}{N!}\int_{-\infty}^s\dots\int_{-\infty}^s
\mathbb{P}_{2,N}(x_1,x_2,\ldots,x_N)dx_1\cdots dx_N, 
\label{def:F_{2,N}(s)}
\end{align}
and the Tracy-Widom distribution function 
$F_2(s)$ is obtained as a scaling limit of $F_{2,N}(s)$
\cite{Forrester,TracyWidom1}: 
\begin{equation}
F_2(s)= \lim_{N\to\infty}F_{2,N}\left(
\sqrt{2N}+\frac{s}{\sqrt{2}N^{1/6}}\right).
\end{equation}

As discussed in \cite{TracyWidom2}, the distribution function 
$F_{2,N}(s)$ is related to the fourth Painlev\'e equation
\begin{equation}
\frac{d^2 w}{dz^2}=\frac{1}{2w}\left(\frac{dw}{dz}\right)^2
+\frac{3}{2}w^3+4zw^2+2\left(z^2-\alpha\right)w+\frac{\beta}{w}.
\label{P4}
\end{equation}
\begin{thm}[Tracy-Widom \cite{TracyWidom2}]
\label{thm:F2N}
$R(s):= \left\{\log F_{2,N}\right\}'$ is given explicitly by
\begin{equation}
F_{2,N}(s)=\exp\left[-\int_s^{\infty}
\left\{
Nw-\frac{z^2}{2}w-\frac{z}{2}w^2-\frac{w^3}{8}+\frac{1}{8w}
\left(\frac{dw}{dz}\right)^2
\right\}dz\right],
\label{PainleveExpression_F2N}
\end{equation}
where $w=w(z)$ is a solution to the Painlev\'e IV equation
with the parameters $\alpha=2N-1$ and $\beta=0$, 
satisfying the boundary condition 
\begin{equation}
w(s)\sim 0 \mbox{ \ as \ } s\to +\infty.
\label{boundaryCondition:ClarksonMcLeod}
\end{equation}
\end{thm}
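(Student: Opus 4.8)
The plan is to realise $F_{2,N}(s)$ as a $\tau$-function of the one-dimensional Toda lattice and to read off the fourth Painlev\'e equation from the Hirota bilinear equation that this $\tau$-function obeys. First I would apply the Andr\'eief (Heine) identity to the symmetric multiple integral defining $F_{2,N}(s)$, converting the integral of $\prod_{i<j}(x_i-x_j)^2$ against the truncated Gaussian weight into a single Hankel determinant, and at the same time deform the weight by a linear exponential:
\begin{equation}
\tau_N(s,t)=\det\bigl[\mu_{i+j}(s,t)\bigr]_{0\le i,j\le N-1},
\qquad
\mu_k(s,t)=\int_{-\infty}^{s}x^{k}e^{-x^{2}+tx}\,dx ,
\end{equation}
so that $\tau_N(s,0)=\det[\mu_{i+j}(s,0)]$ is proportional to $F_{2,N}(s)$ and, by \eqref{normalization}, $F_{2,N}(+\infty)=1$. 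Since $\partial_t\mu_k=\mu_{k+1}$, the determinants $\tau_N(s,t)$ form a Toda $\tau$-sequence in the site variable $N$, obeying the Hirota bilinear equation $\tau_N\,\partial_t^{2}\tau_N-(\partial_t\tau_N)^{2}=\tau_{N+1}\tau_{N-1}$.

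The essential simplification is a similarity reduction coming from the translation invariance of the Vandermonde factor: completing the square in $e^{-x^{2}+tx}$ and shifting $x\mapsto x+t/2$ yields the exact identity $\tau_N(s,t)=e^{Nt^{2}/4}\,\tau_N(s-t/2,0)$, so that the two flows are locked together, $\partial_t\leftrightarrow-\tfrac12\partial_s$. Substituting this into the Toda equation and evaluating at $t=0$ collapses it to a one-variable bilinear relation among neighbouring determinants,
\begin{equation}
\frac{N}{2}+\frac{1}{4}\,\frac{d^{2}}{ds^{2}}\log\tau_N(s,0)=\frac{\tau_{N+1}(s,0)\,\tau_{N-1}(s,0)}{\tau_N(s,0)^{2}} .
\end{equation}
I would supplement this with the logarithmic derivative obtained from $\partial_s\mu_k=s^{k}e^{-s^{2}}$: writing $M(s)=[\mu_{i+j}(s,0)]$ and $\mathbf{v}=(1,s,\dots,s^{N-1})^{\mathsf T}$, the derivative $\partial_s M=e^{-s^{2}}\mathbf{v}\mathbf{v}^{\mathsf T}$ is rank one, so that
\begin{equation}
R(s)=\bigl\{\log F_{2,N}\bigr\}'(s)=\partial_s\log\tau_N(s,0)=e^{-s^{2}}\,\mathbf{v}^{\mathsf T}M^{-1}\mathbf{v}\ge0 ,
\end{equation}
the diagonal value of the Christoffel--Darboux kernel of the polynomials orthogonal with respect to $e^{-x^{2}}$ on $(-\infty,s)$.

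Painlev\'e IV should now emerge by eliminating the neighbouring sites. Setting $w(z)$ equal to a suitable logarithmic derivative of the ratio $\tau_{N+1}/\tau_N$ of consecutive $\tau$-functions, I would use the displayed relation together with the remaining bilinear identities of the Toda $\tau$-sequence (those relating $\tau_N,\tau_{N\pm1}$ and their $s$-derivatives) to eliminate $\tau_{N\pm1}$ and obtain a closed second-order equation for $w$. Matching it against \eqref{P4} should force $\alpha=2N-1$ and $\beta=0$, the quadratic Gaussian exponent producing the $z^{2}$-term of \eqref{P4} and the vanishing of $\beta$ reflecting the absence of any $x^{2a}$ (hard-edge) factor in the weight. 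Finally I would recognise the integrand of \eqref{PainleveExpression_F2N} as the Jimbo--Miwa--Okamoto Hamiltonian $\sigma_N=\partial_s\log\tau_N(s,0)$ attached to this Painlev\'e IV; since that Hamiltonian is exactly $R(s)$, the representation \eqref{PainleveExpression_F2N} follows by integrating $R$ from $s$ to $+\infty$ and using $F_{2,N}(+\infty)=1$.

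The main obstacle is twofold. First, the elimination must be carried out carefully enough that the closed ODE is \emph{manifestly} \eqref{P4} with the stated parameters, not a Painlev\'e IV correct only up to a hidden gauge or rescaling; tracking the additive constant $N/2$ and the shift $s-t/2$ through the reduction is what pins down $\alpha=2N-1$. Second, the boundary condition \eqref{boundaryCondition:ClarksonMcLeod} must be justified: as $s\to+\infty$ the truncated weight tends to the full Gaussian, and the tail estimate $\mu_k(\infty)-\mu_k(s)=O\bigl(s^{k-1}e^{-s^{2}}\bigr)$ forces $R(s)=e^{-s^{2}}\mathbf{v}^{\mathsf T}M^{-1}\mathbf{v}\to0$ super-exponentially. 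Transporting this decay through the identification $w\propto\partial_z\log(\tau_{N+1}/\tau_N)$ yields $w(s)\to0$, which is precisely the condition singling out the Clarkson--McLeod solution among the solutions of Painlev\'e IV that decay at $+\infty$.
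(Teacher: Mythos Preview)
Your route is sound but genuinely different from the paper's. The paper does not use the Toda lattice: it realises $\tau_N$ as a \emph{two-directional Wronskian} solution of the coupled NLS system \eqref{NLS}, with two continuous times $t_1,t_2$ and the seed $f(t_1,t_2)=\int_{-\infty}^{0}e^{kt_1+k^{2}t_2}\,dk$ satisfying the dispersion relation $\partial_2 f=\partial_1^{2}f$. The similarity reduction there is the \emph{scaling} symmetry $\tau_N(\lambda t_1,\lambda^{2}t_2)=\lambda^{-N^{2}}\tau_N(t_1,t_2)$, not your translation identity $\tau_N(s,t)=e^{Nt^{2}/4}\tau_N(s-t/2,0)$; the similarity variable $s=-t_1/(2\sqrt{-t_2})$ then reduces the five NLS bilinear equations \eqref{bilinearEq2}--\eqref{bilinearEq5} directly to the $\sigma$-form \eqref{sigma-form:H:theta} with $\theta_0=0$, $\theta_\infty=N$, after which $w=P(s)$ is \emph{defined} by \eqref{def:P(s)} and the Hamiltonian expression \eqref{H_intermsof_P} yields the integrand of \eqref{PainleveExpression_F2N} by substitution---no elimination across lattice sites is needed. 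Your Toda picture buys a cleaner link to the orthogonal-polynomial side of the random-matrix literature and makes the positivity $R(s)\ge0$ and the decay at $+\infty$ transparent via the rank-one derivative $\partial_s M=e^{-s^{2}}\mathbf{v}\mathbf{v}^{\mathsf T}$; the paper's NLS picture buys a ready-made package of bilinear identities so that the passage to Painlev\'e~IV is algebraic rather than an elimination. One caution on your side: the single reduced Toda equation you display is not by itself enough to close on a second-order ODE for $w$; the ``remaining bilinear identities'' you appeal to are not identities of the Toda $t$-flow but arise from the separate $s$-flow generated by the moving endpoint (equivalently, the string/Virasoro constraint attached to the Gaussian weight), and that second input should be written out explicitly before the elimination can be claimed to yield \eqref{P4} with $\alpha=2N-1$, $\beta=0$.
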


The proof of Theorem \ref{thm:F2N} in \cite{TracyWidom2} 
is based on the Fredholm determinantal expression, 
\begin{equation}
F_{2,N}(s)  :=\mathbb{P}_{2,N}(\lambda_{\mathrm{max}}\leq s)
=\det\left[I-K_{2,N}\right]_{L^2(s,\infty)}.
\label{FredholmDet_finiteN}
\end{equation}
The kernel of the integral operator $K_{2,N}$ is given by
\begin{equation}
K_{2,N}(x,z) =
\frac{e^{-(x^2+z^2)/2}}{2^N\sqrt{\pi} (N-1)!}
\cdot
\frac{H_N(x)H_{N-1}(z)-H_{N-1}(x)H_N(z)}{x-z},
\end{equation}
where $H_k(x)$ is the $k$-th Hermite polynomials. 

In what follows, we present an alternative proof based on 
Hirota's bilinear method in soliton theory 
\cite{ASBook,HirotaBook,NimmoZhao}. 
The solutions to many soliton equations can be expressed in terms of
determinants with Wronski-type or Gram-type structure, 
and the bilinear forms of the equations are reduced to 
algebraic identities of determinants
\cite{FreemanNimmo,HirotaBook,HOS,NimmoZhao}. 
We show that the function $F_{2,N}(s)$ can be obtained from
``two-directional'' Wronskian solutions \cite{HOS} to the nonlinear
Schr\"odinger (NLS) equation
\begin{equation}
i\frac{\partial q}{\partial t}
= \frac{\partial^2 q}{\partial x^2}+2|q|^2 q.
\label{NLS_cplx}
\end{equation}
We also consider the asymptotic behavior of $F_{2,N}(s)$ as
$s\to\pm\infty$, and its relation to the 
``Clarkson-McLeod solution'' \cite{BCHM,ClarksonMcLeod,ItsKapaev} 
to the Painlev\'e IV equation.

\section{Nonlinear Sch\"odinger equation and the Painlev\'e IV}
In this section, we review Hirota bilinear formulation of the 
NLS equation \eqref{NLS_cplx}, 
and its similarity reduction to the Painlev\'e IV equation 
\eqref{P4}.

The relation between NLS-type equations and the Painlev\'e IV 
\eqref{P4} has been discussed from various perspectives 
\cite{ARS,ASBook,BCH,Can,JM,KakeiKikuchi,Kitaev,Nakach}. 
Among these works, our approach is based 
mainly on the work of Jimbo and Miwa \cite{JM}, in which they 
introduced the ``$\tau$-function'' in a systematic manner.

Hereafter we will forget the complex structure and consider the 
following coupled system of nonlinear partial differential equations:
\begin{equation}
-\frac{\partial q}{\partial t_2}
+\frac{\partial^2 q}{\partial t_1^2}+2q^2r =0, \quad
\frac{\partial r}{\partial t_2}
+\frac{\partial^2 r}{\partial t_1^2}+2qr^2 =0.
\label{NLS}
\end{equation}
Setting $r=\bar{q}$, $t_1 = x$, $t_2 = it$ recovers the NLS equation \eqref{NLS_cplx}.
It is well-known that the NLS equation has infinite number of 
conserved currents \cite{ASBook}. 
The first three of the conserved currents are given as
\begin{align}
\mathcal{J}_1 &= qr,\\
\mathcal{J}_2 &= \frac{\partial q}{\partial t_1}r-q\frac{\partial r}{\partial
 t_1},\\
\mathcal{J}_3 &= \frac{\partial^2 q}{\partial t_1^2}r
+q\frac{\partial^2 r}{\partial t_1^2}
-2\frac{\partial q}{\partial t_1}\frac{\partial r}{\partial t_1}
+2q^2r^2, 
\end{align}
which satisfy the continuity equations
\begin{equation}
\frac{\partial\mathcal{J}_1}{\partial t_2}
=\frac{\partial\mathcal{J}_2}{\partial t_1},\quad
\frac{\partial\mathcal{J}_2}{\partial t_2}
=\frac{\partial\mathcal{J}_3}{\partial t_1}.
\end{equation}
These equations guarantee the existence of ``$\tau$-function'' 
of the NLS equation \eqref{NLS} that satisfies \cite{JM} 
\begin{equation}
\frac{\partial^2}{\partial t_1^2}\log\tau =\mathcal{J}_1,\quad
\frac{\partial^2}{\partial t_1\partial t_2}\log\tau =\mathcal{J}_2,\quad
\frac{\partial^2}{\partial t_2^2}\log\tau =\mathcal{J}_3.
\label{def:tau}
\end{equation}
We introduce $\rho_1=\rho_1(t_1,t_2)$ and $\rho_2=\rho_2(t_1,t_2)$ as 
\begin{equation}
\rho_1(t_1,t_2) = \tau(t_1,t_2) q(t_1,t_2), \quad
\rho_2(t_1,t_2) = \tau(t_1,t_2) r(t_1,t_2), 
\end{equation}
which is the same dependent-variable transformation used in
\cite{HirotaBook,HOS} to bilinearize \eqref{NLS}.
Then one can rewrite the equations \eqref{NLS} and \eqref{def:tau}
as bilinear differential equations of Hirota-type:
\begin{align}
& (D_1^2+D_2)\tau\cdot\rho_1=0,
\label{bilinearEq2}\\
& (D_1^2+D_2)\rho_2\cdot\tau=0,
\label{bilinearEq3}\\
& D_1^2 \tau\cdot\tau = 2\rho_1\rho_2,
\label{bilinearEq1}\\
& D_1D_2\tau\cdot\tau = 2D_1\rho_1\cdot\rho_2,
\label{bilinearEq4}\\
& D_2^2\tau\cdot\tau = 2D_1^2\rho_1\cdot\rho_2, 
\label{bilinearEq5}
\end{align}
where we have used the Hirota differential operators 
$D_1$, $D_2$ defined by
\begin{equation}
D_1^m D_2^n f(t_1,t_2)\cdot g(t_1,t_2)
=\left.\left(
\frac{\partial}{\partial t_1}-\frac{\partial}{\partial t'_1}\right)^m 
\left(\frac{\partial}{\partial t_2}-\frac{\partial}{\partial t'_2}\right)^n
f(t_1,t_2)g(t'_1,t'_2)\right|_{
\begin{aligned}
\mbox{\scriptsize$t'_1$}&\mbox{\scriptsize$=t_1$}\\[-2mm]
\mbox{\scriptsize$t'_2$}&\mbox{\scriptsize$=t_2$}
\end{aligned}}.
\end{equation}
Straightforward calculation shows
\begin{prop}
\label{prop:similarity:tau_rho1_rho2}
Let $\tau(t_1,t_2)$, $\rho_1(t_1,t_2)$, $\rho_1(t_1,t_2)$ be 
a solution to the bilinear equations 
\eqref{bilinearEq2}--\eqref{bilinearEq5}. 
Let $\lambda$, $c_1$, $c_2$ be constants and assume $\lambda\neq 0$.
Define $\tilde{\tau}(t_1,t_2)$, $\tilde{\rho}_1(t_1,t_2)$, 
$\tilde{\rho}_1(t_1,t_2)$ as 
\begin{align}
\tilde{\tau}(t_1,t_2) &=\lambda^{-1+(c_1+c_2)/2}
\tau(\lambda t_1,\lambda^2 t_2),\\ 
\tilde{\rho}_1(t_1,t_2)&=\lambda^{c_1}\rho_1(\lambda t_1,\lambda^2 t_2),\\
\tilde{\rho}_2(t_1,t_2)&=\lambda^{c_2}\rho_2(\lambda t_1,\lambda^2 t_2).
\end{align}
Then $\tilde{\tau}(t_1,t_2)$, $\tilde{\rho}_1(t_1,t_2)$, 
$\tilde{\rho}_1(t_1,t_2)$ also solve the equations 
\eqref{bilinearEq2}--\eqref{bilinearEq5}. 
\end{prop}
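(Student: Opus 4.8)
The plan is to reduce the whole verification to a single homogeneity property of the Hirota operators and then check that the powers of $\lambda$ balance in each equation. First I would record how $D_1$ and $D_2$ respond to the change of arguments $(t_1,t_2)\mapsto(\lambda t_1,\lambda^2 t_2)$. For any $f,g$ set $\tilde f(t_1,t_2)=f(\lambda t_1,\lambda^2 t_2)$ and $\tilde g(t_1,t_2)=g(\lambda t_1,\lambda^2 t_2)$. Since $\partial_{t_1}=\lambda\,\partial_{s_1}$ and $\partial_{t_2}=\lambda^2\,\partial_{s_2}$ under $s_1=\lambda t_1$, $s_2=\lambda^2 t_2$, the defining formula for $D_1^m D_2^n$ (differentiate the product $f(s_1,s_2)g(s_1',s_2')$, then set the primed variables equal to the unprimed ones) gives at once
\begin{equation}
D_1^m D_2^n\,\tilde f\cdot\tilde g
=\lambda^{m+2n}\,\bigl(D_1^m D_2^n\,f\cdot g\bigr)(\lambda t_1,\lambda^2 t_2).
\end{equation}
Thus every Hirota derivative $D_1^m D_2^n$ carries the homogeneous weight $\lambda^{m+2n}$; this is the only computational fact the argument needs.

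Next I would substitute this into \eqref{bilinearEq2}--\eqref{bilinearEq5}, writing $\alpha:=-1+(c_1+c_2)/2$ for the exponent attached to $\tilde\tau$. Combined with the prefactors $\lambda^{\alpha},\lambda^{c_1},\lambda^{c_2}$ carried by $\tilde\tau,\tilde\rho_1,\tilde\rho_2$, a bilinear expression $D_1^m D_2^n\,\tilde u\cdot\tilde v$ scales by $\lambda^{a+b+m+2n}$, where $\lambda^a,\lambda^b$ are the prefactors of $\tilde u,\tilde v$. For the two linear equations \eqref{bilinearEq2} and \eqref{bilinearEq3} the key point is that $D_1^2$ and $D_2$ share the same weight $\lambda^{2}$, so the operator $D_1^2+D_2$ scales homogeneously and the zero right-hand side is preserved regardless of $\alpha,c_1,c_2$.

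For the three quadratic equations the weights on the two sides must match, and each produces the same requirement on $\alpha$. In \eqref{bilinearEq1} the left side scales as $\lambda^{2\alpha+2}$ and the right as $\lambda^{c_1+c_2}$, so $2\alpha+2=c_1+c_2$; in \eqref{bilinearEq4} the two sides scale as $\lambda^{2\alpha+3}$ and $\lambda^{c_1+c_2+1}$; in \eqref{bilinearEq5} as $\lambda^{2\alpha+4}$ and $\lambda^{c_1+c_2+2}$. All three reduce to $\alpha=-1+(c_1+c_2)/2$, which is precisely the exponent prescribed for $\tilde\tau$. Hence each scaled equation equals a single power of $\lambda$ times the corresponding original equation evaluated at $(\lambda t_1,\lambda^2 t_2)$, and therefore holds.

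There is no genuine obstacle here, since the content lies entirely in the bookkeeping of these weights; the one step worth stating with care is the homogeneity identity itself, in particular that the passage to equal arguments commutes with the rescaling so that the clean factor $\lambda^{m+2n}$ emerges. I would present that identity as a short lemma and then list the three weight balances above, leaving the elementary evaluations to the reader.
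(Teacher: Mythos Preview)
Your proposal is correct and is precisely the ``straightforward calculation'' the paper alludes to without writing out; the paper gives no explicit proof beyond that phrase, and your homogeneity bookkeeping for $D_1^mD_2^n$ under $(t_1,t_2)\mapsto(\lambda t_1,\lambda^2 t_2)$ together with the weight balance $2\alpha+2=c_1+c_2$ is exactly what that calculation amounts to.
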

Motivated from Proposition \ref{prop:similarity:tau_rho1_rho2}, 
we now impose similarity conditions on $\tau(t_1,t_2)$,
$\rho_1(t_1,t_2)$, $\rho_1(t_1,t_2)$:
\begin{align}
\tau(t_1,t_2) &=\lambda^{-1+(c_1+c_2)/2}\tau(\lambda t_1,\lambda^2 t_2),
\label{similarity:tau}\\
\rho_1(t_1,t_2)&=\lambda^{c_1}\rho_1(\lambda t_1,\lambda^2 t_2),
\label{similarity:rho1}\\
\rho_2(t_1,t_2)&=\lambda^{c_2}\rho_2(\lambda t_1,\lambda^2 t_2),
\label{similarity:rho2}
\end{align}
for all $\lambda\in\mathbb{C}^{*}$, where $c_1$ and $c_2$ are constants.
Then $q(t_1,t_2)$ and $r(t_1,t_2)$ satisfy 
\begin{align}
q(t_1,t_2)&=\lambda^{1+(c_1-c_2)/2}q(\lambda t_1,\lambda^2 t_2),
\label{similarity:q}\\
r(t_1,t_2)&=\lambda^{1+(-c_1+c_2)/2}r(\lambda t_1,\lambda^2 t_2).
\label{similarity:r}
\end{align}
\begin{prop}[cf. \cite{JM}]
\label{Prop:sigma-form_P4}
Suppose $\tau(t_1,t_2)$, $q(t_1,t_2)$ and $r(t_1,t_2)$ satisfy
the differential equations \eqref{NLS}, \eqref{def:tau}, 
and the similarity conditions \eqref{similarity:tau}, 
\eqref{similarity:q}, \eqref{similarity:r}. 
Then one can introduce $F(s)$ as 
\begin{equation}
\tau(t_1,t_2)=(2\sqrt{-t_2})^{(2-c_1-c_2)/2}F(s), \quad
s=\frac{-t_1}{2\sqrt{-t_2}}.
\label{def:F(s)}
\end{equation}
Furthermore, if we define $H(s)$ as 
\begin{equation}
H(s)=\frac{d}{ds}\log F(s)=\frac{F'(s)}{F(s)}, 
\label{def:G(s)}
\end{equation}
then $H(s)$ solves the following differential equation,
\begin{equation}
(H'')^2-4 (H-s H')^2
+4H'\left\{(H')^2-(c_1-c_2) H'+2(c_1+c_2-2)\right\}=0.
\label{sigma-form:H:c1c2}
\end{equation}
\end{prop}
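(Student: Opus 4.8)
The plan is to convert the self-similarity into an ordinary differential equation for $H$ in three stages: normalize $\tau$, express the conserved densities $\mathcal{J}_1,\mathcal{J}_2,\mathcal{J}_3$ through $H$, and then eliminate $q$ and $r$ by means of one algebraic identity among the $\mathcal{J}_i$. First I would linearize the scaling constraint. Differentiating \eqref{similarity:tau} with respect to $\lambda$ and setting $\lambda=1$ gives the first-order Euler equation $t_1\partial_{t_1}\log\tau+2t_2\partial_{t_2}\log\tau=\mu$ with $\mu=(2-c_1-c_2)/2$; the method of characteristics (invariant $s=-t_1/(2\sqrt{-t_2})$, particular solution $\tfrac{\mu}{2}\log(-t_2)$) shows that its general solution is exactly the ansatz \eqref{def:F(s)}, which defines $F(s)$ and hence $H=(\log F)'$.

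Writing $\log\tau=\tfrac{\mu}{2}\log(-t_2)+\log F(s)$ and differentiating by the chain rule, the three relations \eqref{def:tau} express the conserved densities as $\mathcal{J}_1=H'/\{4(-t_2)\}$, $\mathcal{J}_2=-(H+sH')/\{4(-t_2)^{3/2}\}$, and $\mathcal{J}_3=(s^2H'+3sH-2\mu)/\{4(-t_2)^2\}$; each is a function of $s$ times a fixed power of $(-t_2)$, as forced by homogeneity. The substantive step is to produce a closed relation among the $\mathcal{J}_i$, for which I would use two inputs. (a) Working with $f=\log q$ and $g=\log r$, the defining formulas give $\partial_{t_1}\log(q/r)=f_{t_1}-g_{t_1}=\mathcal{J}_2/\mathcal{J}_1$, while a short manipulation of $\mathcal{J}_3$ (rewriting $q_{t_1t_1}r+qr_{t_1t_1}-2q_{t_1}r_{t_1}=\partial_{t_1}^2\mathcal{J}_1-4q_{t_1}r_{t_1}$ and using $q_{t_1}r_{t_1}=(\mathcal{J}_{1,t_1}^2-\mathcal{J}_2^2)/4\mathcal{J}_1$) yields the identity $\mathcal{J}_3=\mathcal{J}_1\,\partial_{t_1}^2\log\mathcal{J}_1+\mathcal{J}_2^2/\mathcal{J}_1+2\mathcal{J}_1^2$. (b) Differentiating \eqref{similarity:q}, \eqref{similarity:r} in $\lambda$ gives $t_1\partial_{t_1}\log(q/r)+2t_2\partial_{t_2}\log(q/r)+(c_1-c_2)=0$, and the NLS flow \eqref{NLS} lets me evaluate $\partial_{t_2}\log(q/r)=(f+g)_{t_1t_1}+\tfrac12\{(\partial_{t_1}\log\mathcal{J}_1)^2+(\mathcal{J}_2/\mathcal{J}_1)^2\}+4\mathcal{J}_1$.

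The final step is to combine (a) and (b): I substitute the value of $(f+g)_{t_1t_1}=\partial_{t_1}^2\log\mathcal{J}_1$ supplied by identity (a) into the relation (b), insert the explicit $s$-expressions for $\mathcal{J}_1,\mathcal{J}_2,\mathcal{J}_3$, observe that every power of $(-t_2)$ cancels, and multiply through by $4(H')^2$ to arrive at \eqref{sigma-form:H:c1c2}. The one obstacle worth flagging is that a naive direct evaluation of $\partial_{t_1}^2\log\mathcal{J}_1$ introduces $H'''$ and would leave a third-order equation; identity (a) is precisely what trades this second $t_1$-derivative of $\log\mathcal{J}_1$ for the first-order quantity $\mathcal{J}_3/\mathcal{J}_1$, lowering the order to two and generating the $(H'')^2$ term of the $\sigma$-form. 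Careful bookkeeping of the signs arising from $t_2<0$ (so that $\sqrt{-t_2}$ is real), and the identification $2(c_1+c_2-2)=-4\mu$ that matches the constant term, are the only other points demanding attention.
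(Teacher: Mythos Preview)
Your proposal is correct and reaches \eqref{sigma-form:H:c1c2}; the computations for $\mathcal{J}_1,\mathcal{J}_2,\mathcal{J}_3$ in terms of $H$, the algebraic identity (a), and the similarity relation (b) all check out, and the final combination indeed collapses to the $\sigma$-form after multiplying by $4(H')^2$.

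The paper's proof is close in spirit but organized differently. It first specializes to the similarity variable, introducing $u(s)=q(-s,-1/4)$ and $v(s)=r(-s,-1/4)$, substitutes these into \eqref{NLS} to obtain second-order ODEs for $u,v$, and substitutes into \eqref{def:tau} to obtain three relations linking $H$ to $uv$, $u'v-uv'$, and $u''v-2u'v'+uv''-2u^2v^2$; the elimination of $u,v$ is then stated as a single line. Your route instead stays at the PDE level, packages the $\tau$-derivatives as the conserved densities $\mathcal{J}_i$, and does the elimination via the model-independent identity (a) together with the scalar similarity constraint on $\log(q/r)$. The advantage of your presentation is that the elimination step is made completely explicit: identity (a) is precisely what replaces $\partial_{t_1}^2\log\mathcal{J}_1$ (which would bring in $H'''$) by a combination of $\mathcal{J}_3/\mathcal{J}_1$, $\mathcal{J}_2^2/\mathcal{J}_1^2$, and $\mathcal{J}_1$, so the mechanism producing a second-order equation with the $(H'')^2$ term is visible. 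The paper's version, by contrast, is shorter to write down because the ODEs for $u,v$ absorb the use of \eqref{NLS} at the outset, but the phrase ``eliminating $u,v$'' hides exactly the manipulation you spell out. Both arguments use the same ingredients---\eqref{NLS}, \eqref{def:tau}, and the three similarity conditions---and are essentially equivalent reorderings of the same calculation.
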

\begin{proof}
Setting $\lambda=1/(2\sqrt{-t_2})$ and 
$F(s)=\tau(-s,-1/4)$ in \eqref{similarity:tau},  
we obtain \eqref{def:F(s)}. Similarly we obtain
\begin{equation}
\begin{aligned}
q(t_1,t_2)&= (2\sqrt{-t_2})^{(-2-c_1+c_2)/2}u(s),\\
r(t_1,t_2)&= (2\sqrt{-t_2})^{(-2+c_1-c_2)/2}v(s)
\end{aligned}
\label{def:uv}
\end{equation}
by setting $u(s)=q(-s,-1/4)$ and $v(s)=r(-s,-1/4)$.
Substituting \eqref{def:uv} for \eqref{NLS}, we have
\begin{equation}
\begin{aligned}
u''&=2 s u'+2 u^2 v +(c_1-c_2+2)u,\\
v''&=-2 s v'+2 u v^2+(c_1-c_2-2)v.
\end{aligned}
\end{equation}
In the same manner, substituting \eqref{def:F(s)} and 
\eqref{def:uv} for \eqref{def:tau}, we obtain
\begin{equation}
\begin{aligned}
&H'+uv=0, \quad
sH'+H+\frac{1}{2}(u'v-uv')=0,\\
&4s^2H'+12 s H+u''v -2 u'v'+uv''-2 u^2 v^2+4(c_1+c_2-2)=0.
\end{aligned}
\label{similarity_NLS_CQ}
\end{equation}
Eliminating $u$, $v$ from the equations \eqref{similarity_NLS_CQ}, 
one obtains \eqref{sigma-form:H:c1c2}.
\end{proof}

Denote as $2(\theta_{\infty}\pm\theta_0)$ the roots of 
a quadratic polynomial $x^2-(c_1-c_2)x+2(c_1+c_2-2)$. 
Then \eqref{sigma-form:H:c1c2} is written as 
\begin{equation}
\left(H''\right)^2
-4\left(sH'-H\right)^2
+4H'\left(H'-2\theta_{\infty}-2\theta_0\right)
\left(H'-2\theta_{\infty}+2\theta_0\right)=0.
\label{sigma-form:H:theta}
\end{equation}
Setting $\sigma=H-2t\left(\theta_{\infty}+\theta_0\right)$, 
we obtain the $\sigma$-form of the Painlev\'e IV equation \cite{JM}:
\begin{equation}
\left(\sigma''\right)^2
-4\left(s\sigma'-\sigma\right)^2
+4\sigma'\left(\sigma'+4\theta_0\right)
\left(\sigma'+2\theta_{\infty}+2\theta_0\right)=0,
\end{equation}
(cf. \cite{BCH,Clarkson,ForresterWitte,Okamoto}).

\begin{prop}[\cite{BCH,JM}]
Define $Q=Q(s)$, $P=P(s)$ as 
\begin{align}
Q(s) &=
 -\frac{\sigma'(s)}{2}=-\frac{H'(s)}{2}+\theta_{\infty}+\theta_0,
\label{def:Q(s)}\\
P(s) &=
\frac{-\sigma''(s)-2\left\{s\sigma'(s)-\sigma(s)\right\}}{2
\left\{\sigma'(s)+2\theta_{\infty}+2\theta_0\right\}}
=\frac{-H''(s)-2\left\{sH'(s)-H(s)\right\}}{2H'(s)}.
\label{def:P(s)}
\end{align}
Then $H$ can be expressed in terms of $Q$ and $P$: 
\begin{equation}
H=\frac{2}{P}Q^2-\left(P+2s+\frac{4\theta_0}{P}\right)Q
+\left(\theta_0+\theta_{\infty}\right)\left(P+2s\right), 
\label{Hamiltonian}
\end{equation}
and we have the Hamiltonian system associated with the Painlev\'e IV:
\begin{align}
\frac{dQ}{ds}
&=P\frac{\partial H}{\partial P}
=-\frac{2}{P}Q^2+\left(-P+\frac{4\theta_0}{P}\right)Q+(\theta_0+\theta_{\infty})P,
\label{HamiltonianEq1}\\
\frac{dP}{ds}&=-P\frac{\partial H}{\partial Q}
=-4Q+P^2+2s P+4\theta_0.
\label{HamiltonianEq2}
\end{align}
\end{prop}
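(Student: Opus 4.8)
The plan is to treat the two assertions separately: first the purely algebraic identity \eqref{Hamiltonian}, which follows from the $\sigma$-form \eqref{sigma-form:H:theta} alone, and then the flow \eqref{HamiltonianEq1}--\eqref{HamiltonianEq2}, for which the cleanest route passes back through the auxiliary functions $u,v$ of the similarity reduction. Throughout I write $\vartheta=\theta_0+\theta_\infty$ for brevity. The starting point is to invert the definitions \eqref{def:Q(s)} and \eqref{def:P(s)}: the first gives $H'=2\vartheta-2Q$, and the second, after clearing the denominator $2H'$, gives the identity $H''=2H-2(s+P)H'$. Both are exact consequences of the definitions, so they may be substituted freely into any relation satisfied by $H$.

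To prove \eqref{Hamiltonian} I would substitute these expressions for $H'$ and $H''$ into the $\sigma$-form \eqref{sigma-form:H:theta}. The decisive feature is that, after substitution, the two quadratic-in-$H$ contributions---the one from $(H'')^2$ and the one from $-4(sH'-H)^2$---carry opposite signs and cancel, so the $\sigma$-form collapses to an equation that is \emph{linear} in $H$, with coefficient $-16P(\vartheta-Q)$. Solving this linear equation for $H$ reproduces \eqref{Hamiltonian}: the residual term $4H'(H'-2\theta_\infty-2\theta_0)(H'-2\theta_\infty+2\theta_0)$ becomes $-32Q(\vartheta-Q)(2\theta_0-Q)$, a cubic in $Q$ whose factor $(\vartheta-Q)$ cancels against the coefficient and supplies exactly the two terms with denominator $P$, while the remaining numerator $16(\vartheta-Q)^2P(2s+P)$ yields $(\vartheta-Q)(2s+P)$. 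This step is essentially bookkeeping once the $H^2$-cancellation is noticed, and I expect no genuine difficulty here.

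For the flow, equation \eqref{HamiltonianEq1} is immediate: differentiating \eqref{def:Q(s)} gives $dQ/ds=-H''/2$, into which I substitute $H''=2H-2(s+P)H'$ and then eliminate $H$ by \eqref{Hamiltonian}, obtaining \eqref{HamiltonianEq1} after collecting terms. The real obstacle is \eqref{HamiltonianEq2}, because $P$ involves $H''$ and hence $dP/ds$ naively requires $H'''$, which is not directly available. One resolution is to differentiate \eqref{sigma-form:H:theta} once---it is linear in $H'''$ with coefficient $2H''$---and eliminate $H'''$, but this is algebraically heavy. I would instead use the similarity-reduced system: from \eqref{similarity_NLS_CQ} one has $H'=-uv$ and $2H=2suv-(u'v-uv')$, and feeding these into \eqref{def:P(s)} yields the clean identification $P=-2s-v'/v$. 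Differentiating this and using the reduced equation $v''=-2sv'+2uv^2+(c_1-c_2-2)v$ from the proof of Proposition \ref{Prop:sigma-form_P4}, together with $uv=2Q-2\vartheta$, gives $dP/ds=P^2+2sP-4Q+4\theta_\infty+4\theta_0+(c_2-c_1)$. The relation $c_1-c_2=4\theta_\infty$---the sum of the roots $2(\theta_\infty\pm\theta_0)$ of $x^2-(c_1-c_2)x+2(c_1+c_2-2)$---then cancels the spurious constant and leaves precisely \eqref{HamiltonianEq2}.

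Finally, the Hamiltonian reading of the system is a one-line verification: differentiating \eqref{Hamiltonian} directly gives $P\,\partial H/\partial P=-\tfrac{2}{P}Q^2+(-P+\tfrac{4\theta_0}{P})Q+\vartheta P$ and $-P\,\partial H/\partial Q=-4Q+P^2+2sP+4\theta_0$, which coincide with the right-hand sides of \eqref{HamiltonianEq1} and \eqref{HamiltonianEq2} already obtained. The only conceptual point is to recognize that the extra factor $P$ in $P\,\partial_P H$ and $-P\,\partial_Q H$ is exactly what matches the dynamics dictated by the $\sigma$-form.
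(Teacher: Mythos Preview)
Your argument is correct. For the identity \eqref{Hamiltonian} and for \eqref{HamiltonianEq1} your route is essentially the paper's: the paper first records the quadratic relation $H'P^2+H''P-(H'-2\theta_0-2\theta_\infty)(H'+2\theta_0-2\theta_\infty)=0$ (your substitution $H''=2H-2(s+P)H'$ into the $\sigma$-form is exactly how one verifies it), reads off $Q'$ from it, and then combines these to recover \eqref{Hamiltonian}. You do the same computations in a slightly different order.

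Where you diverge from the paper is in the derivation of \eqref{HamiltonianEq2}. You pass through the similarity variables $u,v$ of Proposition~\ref{Prop:sigma-form_P4}, identify $P=-2s-v'/v$, and differentiate. This works, and the bookkeeping you describe checks out, but it tacitly assumes that $H$ arises from the NLS reduction---an assumption not present in the statement of the proposition, which is phrased purely in terms of the $\sigma$-form. The paper's proof of \eqref{HamiltonianEq2} is both shorter and self-contained: once \eqref{Hamiltonian} and \eqref{HamiltonianEq1} are in hand, one simply differentiates \eqref{Hamiltonian} with respect to $s$. Since the explicit $s$-dependence of the right-hand side contributes exactly $-2Q+2\vartheta=H'$, the chain rule yields $\partial_QH\cdot Q'+\partial_PH\cdot P'=0$; substituting $Q'=P\,\partial_PH$ and cancelling $\partial_PH$ gives $P'=-P\,\partial_QH$, which is \eqref{HamiltonianEq2}. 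This avoids both the $u,v$ detour and the differentiation of the $\sigma$-form that you flagged as ``algebraically heavy.'' Your alternative buys a concrete closed-form expression for $P$ in terms of $v$, which may be useful elsewhere, but for the proposition as stated the paper's route is cleaner and applies to any $H$ satisfying \eqref{sigma-form:H:theta}.
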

\begin{proof}
Straightforward calculation shows that 
$P$ of \eqref{def:P(s)} satisfies 
\begin{equation}
H' P^2+H'' P -\left(H'-2\theta_0-2\theta_{\infty}\right)
\left(H'+2\theta_0-2\theta_{\infty}\right)=0.
\label{QuadEq_HP}
\end{equation}
The differential equation 
\begin{equation}
\frac{dQ}{ds}=
-\frac{2}{P}Q^2+\left(-P+\frac{4\theta_0}{P}\right)Q+(\theta_0+\theta_{\infty})P
\label{difEq:dQ/ds}
\end{equation}
follows from \eqref{QuadEq_HP}. 
The relation \eqref{Hamiltonian} follows from 
\eqref{def:Q(s)}, \eqref{def:P(s)} and \eqref{difEq:dQ/ds}. 
Differentiating \eqref{Hamiltonian}, one obtains \eqref{HamiltonianEq2}.
\end{proof}
The following proposition can be verified by straightforward calculation.
\begin{prop} 
Suppose $H=H(s)$ satisfies \eqref{sigma-form:H:theta} and 
define $P(s)$ as \eqref{def:P(s)}. 
Then $w=P(z)$ satisfies the Painlev\'e IV \eqref{P4} with 
$\alpha=2\theta_{\infty}-1$, $\beta=-8\theta_0^2$.
\end{prop}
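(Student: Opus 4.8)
The plan is to obtain \eqref{P4} for $w=P$ by eliminating $Q$ from the Hamiltonian system \eqref{HamiltonianEq1}--\eqref{HamiltonianEq2}. Under the present hypotheses the preceding proposition already guarantees that the pair $(Q,P)$, with $Q=-H'/2+\theta_\infty+\theta_0$ as in \eqref{def:Q(s)} and $P$ as in \eqref{def:P(s)}, solves that system; I would therefore take \eqref{HamiltonianEq1} and \eqref{HamiltonianEq2} as given rather than reworking the $\sigma$-form \eqref{sigma-form:H:theta} or the quadratic relation \eqref{QuadEq_HP}. The advantage is that both equations are first order in $Q$, so a single elimination yields a second-order scalar equation for $P$.

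Concretely, I would first read \eqref{HamiltonianEq2} as an algebraic relation for $Q$, namely $4Q=P^2+2sP+4\theta_0-P'$, which expresses $Q$ polynomially in $s$, $P$ and $P'$. Differentiating this gives $dQ/ds$ as a polynomial in $s,P,P',P''$. Substituting both $Q$ and $dQ/ds$ into \eqref{HamiltonianEq1} and then multiplying through by $P$ removes the $1/P$ denominators and leaves a single relation in which $P''$ appears linearly. It is convenient to keep the result organized around the combination $A:=4Q=P^2+2sP+4\theta_0-P'$, since the nonlinear part of \eqref{HamiltonianEq1} collapses to $\tfrac12 A^2+(P^2-4\theta_0)A$ after the multiplication.

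The final step is to divide by $P$ and recognize the canonical shape of \eqref{P4}. I expect the $(P')^2/2P$ term to come from the $(P')^2$ piece of $A^2$, the polynomial part $\tfrac32 P^3+4sP^2$ to assemble directly, and---crucially---the spurious $PP'$ and $P'$ cross terms to cancel identically; verifying this cancellation is the one calculation worth doing carefully. Collecting the surviving $P$-free numerator yields $\beta=-8\theta_0^2$, and the coefficient of the linear term comes out as $2(s^2+1-2\theta_\infty)$, so comparison with $2(z^2-\alpha)w$ in \eqref{P4} gives $\alpha=2\theta_\infty-1$. The main obstacle is purely the bookkeeping of these monomials; there is no conceptual difficulty, which is exactly why the statement is asserted ``by straightforward calculation.''
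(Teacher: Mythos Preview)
Your proposal is correct and is essentially what the paper intends: the paper offers no proof beyond the remark that the result ``can be verified by straightforward calculation,'' and eliminating $Q$ from the Hamiltonian system \eqref{HamiltonianEq1}--\eqref{HamiltonianEq2} supplied by the preceding proposition is the natural and most economical realization of that calculation. Your bookkeeping outline (solve \eqref{HamiltonianEq2} for $Q$, differentiate, substitute into \eqref{HamiltonianEq1}, clear the $1/P$, and observe the cancellation of the $PP'$ and $sP'$ cross terms) is sound and indeed produces $\alpha=2\theta_\infty-1$, $\beta=-8\theta_0^2$.
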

We remark that 
the Hamiltonian \eqref{Hamiltonian} can be written in 
terms of $w(s)=P(s)$:
\begin{equation}
H(s)=\theta_{\infty}w
-\frac{s^2}{2}w-\frac{s}{2}w^2-\frac{1}{8}w^3
+\frac{(w')^2}{8w}-\frac{2 \theta_0^2}{w}
+2\theta_{\infty}s.
\label{H_intermsof_P}
\end{equation}

\section{Two-directional Wronskian and GUE}
We start considering a class of solution of the bilinear
equations \eqref{bilinearEq2}--\eqref{bilinearEq5}, which is expressed in terms of 
``two-directional'' Wronski determinant:
\begin{equation}
\tau_N(t_1,t_2)=\det\left[\partial_1^{i+j-2} f(t_1,t_2)\right]_{i,j=1}^N.
\label{2D-Wr}
\end{equation}
Hereafter we use the abbreviation $\partial_1=\partial/\partial t_1$, 
$\partial_2=\partial/\partial t_2$.

\begin{thm}[cf. \cite{HOS}]
\label{thm:2d-Wr}
Suppose the function $f(t_1,t_2)$ satisfies
\begin{equation}
\partial_2 f(t_1,t_2)=\partial_1^2 f(t_1,t_2).
\label{DispersionRelation}
\end{equation}
Then the $\tau$-function \eqref{2D-Wr} gives a class of solutions 
for the bilinear equations \eqref{bilinearEq2}--\eqref{bilinearEq5} with 
$\tau=\tau_N(t_1,t_2)$, $\rho_1=\tau_{N+1}(t_1,t_2)$ 
$\rho_2=\tau_{N-1}(t_1,t_2)$. 
\end{thm}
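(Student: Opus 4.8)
The plan is to use the standard Wronskian technique (\cite{FreemanNimmo,HirotaBook,HOS,NimmoZhao}): translate each bilinear equation into a quadratic determinantal identity and recognize it as a Plücker relation, equivalently a Jacobi/Laplace identity. First I would fix a compact notation. Writing $f_m := \partial_1^m f$ and letting $\widehat{k}$ denote the column vector $\bigl(f_{k}, f_{k+1}, \ldots, f_{k+N-1}\bigr)^{T}$, the two-directional Wronskian becomes $\tau_N = |\widehat{0}, \widehat{1}, \ldots, \widehat{N-1}|$, a symmetric (Hankel) determinant. The dispersion relation \eqref{DispersionRelation} is the engine of the whole argument: it yields the column-shift rules $\partial_1\widehat{k} = \widehat{k+1}$ and $\partial_2\widehat{k} = \widehat{k+2}$, so that every derivative $\partial_1^a\partial_2^b\tau_N$ is a signed sum of shifted Wronskians $|\widehat{i_0},\ldots,\widehat{i_{N-1}}|$ whose index set differs from the staircase $0,1,\ldots,N-1$ only in its top entries.

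Next I would record the handful of derivative formulas that actually occur. Because replacing an interior column by its right neighbour produces a repeated column, most terms drop out, leaving $\partial_1\tau_N = |\widehat{0},\ldots,\widehat{N-2},\widehat{N}|$, while $\partial_1^2\tau_N$ and $\partial_2\tau_N$ are respectively the sum and the difference of the two shifted Wronskians $|\widehat{0},\ldots,\widehat{N-2},\widehat{N+1}|$ and $|\widehat{0},\ldots,\widehat{N-3},\widehat{N-1},\widehat{N}|$; the analogous, slightly longer expansions give $\partial_1\partial_2\tau_N$ and $\partial_2^2\tau_N$, and the same formulas apply verbatim to $\rho_1=\tau_{N+1}$ and $\rho_2=\tau_{N-1}$.

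With these in hand I would treat the five equations in two groups. Equation \eqref{bilinearEq1}, once the Hirota operator is expanded, reads $\tau_N\,\partial_1^2\tau_N-(\partial_1\tau_N)^2=\tau_{N+1}\tau_{N-1}$; this is purely a $\partial_1$-identity and is exactly the Desnanot--Jacobi (Dodgson) identity for the $(N+1)\times(N+1)$ Hankel matrix underlying $\tau_{N+1}$, under the identifications $M^{N}_{N}=\tau_N$, $M^{N}_{N-1}=M^{N-1}_{N}=\partial_1\tau_N$, $M^{N-1}_{N-1}=\partial_1^2\tau_N$ and $M^{N-1,N}_{N-1,N}=\tau_{N-1}$, where $M^{i}_{j}$ deletes row $i$ and column $j$. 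For the remaining four equations I would expand the $D_1$, $D_2$ Hirota derivatives into ordinary derivatives, substitute the Wronskian formulas of the previous step, and use \eqref{DispersionRelation} to convert each $\partial_2$ into a double column shift; each equation then collapses to a Plücker relation among $N\times N$ minors of the matrix with columns $\widehat{0},\widehat{1},\widehat{2},\ldots$, which holds identically.

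The main obstacle I anticipate is bookkeeping rather than conceptual. The second-order operators $D_1^2$, $D_1D_2$ and especially $D_2^2$ generate several shifted-Wronskian terms with competing signs, and one must check that they assemble into a single genuine Plücker relation. The fact that the three objects $\tau_{N-1},\tau_N,\tau_{N+1}$ have different sizes is precisely what is resolved by passing to the Jacobi identity, while the mismatch between $\partial_2\tau_N$ and $\partial_1^2\tau_N$ (they differ by $2\,|\widehat{0},\ldots,\widehat{N-3},\widehat{N-1},\widehat{N}|$) is exactly what makes the dispersion relation indispensable. Verifying the sign-and-shift accounting in \eqref{bilinearEq4} and \eqref{bilinearEq5} is where I expect to spend the most effort.
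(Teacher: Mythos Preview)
Your proposal is correct and takes essentially the same route as the paper: compute the shift formulas for the derivatives of the Hankel Wronskian, then reduce each bilinear equation to a Desnanot--Jacobi (Jacobi/Pl\"ucker) identity. The only refinement worth borrowing is the paper's two-index notation $\begin{vmatrix} m_1,\ldots,m_N\\ n_1,\ldots,n_N\end{vmatrix}:=\det\bigl[\partial_1^{m_i+n_j}f\bigr]$, which tracks row and column shifts independently; you will want this because the Jacobi expansion of the size-changing products $(\partial_1^a\tau_{N+1})\cdot(\partial_1^b\tau_{N-1})$ naturally produces $N\times N$ minors with a shifted \emph{row} as well as a shifted column, something your column-only symbol $\widehat{k}$ does not record.
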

The equations \eqref{bilinearEq2}, \eqref{bilinearEq3}, \eqref{bilinearEq1}
are reduced cases of the equations (4.10a), (4.10b), (4.10c) of \cite{HOS}, 
respectively. 
One can find a proof in \S 4 of \cite{HOS} and we omit here.
We shall give a proof for the remaining two equations in Appendix B.

Here we choose $f(t_1,t_2)$ as
\begin{equation}
f(t_1,t_2)= \int_{-\infty}^0 e^{kt_1+k^2 t_2}dk.
\label{self-similar_f}
\end{equation}
This function admits self-similarity as 
\begin{equation}
f(\lambda t_1,\lambda^2 t_2) = 
\int_{-\infty}^0 e^{k\lambda t_1+k^2\lambda^2 t_2}dk
= \int_{-\infty}^0
e^{kt_1+k^2 t_2}\frac{dk}{\lambda}
=\lambda^{-1}f(t_1,t_2)
\end{equation}
for $\lambda>0$, 
and hence the two-directional Wronskian \eqref{2D-Wr} also 
has self-similarity:
\begin{equation}
\tau_N(\lambda t_1,\lambda^2 t_2) 
= \lambda^{-N^2} \tau_N(t_1,t_2).
\end{equation}
Thus the scaling exponent for $\rho_1(t_1,t_2)=\tau_{N+1}(t_1,t_2)$ is 
$c_1=(N+1)^2$, and that for $\rho_2(t_1,t_2)=\tau_{N-1}(t_1,t_2)$ is
$c_2=(N-1)^2$. 
It follows from Proposition \ref{Prop:sigma-form_P4} that 
\begin{equation}
F(s)=\left(2\sqrt{-t_2}\right)^{-(2-c_1-c_2)/2}\tau_N(t_1,t_2)
=\left(2\sqrt{-t_2}\right)^{N^2}\tau_N(t_1,t_2)
\label{F_to_tauN}
\end{equation}
depends only on $s=-t_1/(2\sqrt{-t_2})$. 
Since $c_1=(N+1)^2$, $c_2=(N-1)^2$, 
$H(s)$ of \eqref{def:G(s)} satisfies the $\sigma$-form Painlev\'e IV 
\eqref{sigma-form:H:theta} with 
$\theta_0=0$, $\theta_{\infty}=N$:
\begin{equation}
(H'')^2-4 (H-s H')^2+4H'\left(H'-2N\right)^2=0.
\label{diffEq:G_F2N}
\end{equation}

The two-directional Wronskian $\tau_N$ under consideration 
coincides with the distribution function $F_{2,N}(s)$ as shown below.
For this purpose, we assume $t_2<0$ and 
substitute $f(t_1,t_2)$ of \eqref{self-similar_f} into 
\eqref{2D-Wr}:
\begin{align}
\tau_N &=\det\left[
\int_{-\infty}^{0}k^{i+j-2}e^{kt_1+k^2 t_2}dk
\right]_{i,j=1}^N
\nonumber\\
&=\frac{e^{Ns^2}}{(-t_2)^{N^2/2}}\det\left[
\int_{-\infty}^{s}(k-s)^{i+j-2}e^{-k^2}dk
\right]_{i,j=1}^N
\nonumber\\
&=\frac{e^{Ns^2}}{(-t_2)^{N^2/2}}\det\left[
\int_{-\infty}^{s}k^{i+j-2}e^{-k^2}dk
\right]_{i,j=1}^N.
\label{tau_detphiphi}
\end{align}

On the other hand, applying an integral identity 
\cite{Andreief,deBruijn,TracyWidom3}
\begin{align}
&\idotsint\det\left[\phi_j(x_k)\right]_{j,k=1,\dots,N}
\det\left[\psi_j(x_k)\right]_{j,k=1,\dots,N}dx_1\cdots x_N
\nonumber\\
&= N! \det\left[\int\phi_j(x)\psi_k(x)dx\right]_{j,k=1,\dots,N}
\label{deBruijnIdentity}
 \end{align}
to \eqref{def:F_{2,N}(s)}, one can rewrite $F_{2,N}(s)$ as 
\begin{equation}
F_{2,N}(s)=\frac{2^{N(N-1)/2}\pi ^{-N/2}}{\prod_{j=1}^N j!}
\det\left[\int_{-\infty}^s
x^{j+k-2}e^{-x^2}dx\right]_{j,k=1,\dots,N}.
\label{detrep_F2N}
\end{equation}
Comparing \eqref{F_to_tauN}, \eqref{tau_detphiphi}, and
\eqref{detrep_F2N}, we obtain
\begin{equation}
F(s) = 2^{N(N+1)/2}\pi^{N/2}e^{Ns^2}
\left\{\prod_{j=1}^{N-1}(j!)\right\}F_{2,N}(s), 
\end{equation}
and thus
\begin{equation}
H(s)=2Ns + R(s), \quad 
R(s):=\frac{d}{ds}\log F_{2,N}(s).
\label{def:R(s)}
\end{equation}
It follows from \eqref{H_intermsof_P} and \eqref{def:R(s)} that 
\begin{equation}
\frac{d}{ds}\log F_{2,N}(s) = 
Nw-\frac{s^2}{2}w-\frac{s}{2}w^2-\frac{w^3}{8}
+\frac{\left(w'\right)^2}{8w}. 
\end{equation}
Integrating once, we obtain \eqref{PainleveExpression_F2N}.

\section{Asymptotic behavior}
Real solutions to the Painlev\'e IV equation
in the case $\beta=0$ and real $\alpha$ with the boundary condition 
\eqref{boundaryCondition:ClarksonMcLeod} are called 
``Clarkson-McLeod solution'' \cite{ClarksonMcLeod,BCHM,ItsKapaev}.
\begin{thm}[Clarkson-McLeod solution \cite{Abdullaye,BCHM,Lu}]
\label{thm:Clarkson-McLeodSolution}
In the case $\beta=0$ and real $\alpha$, 
there exists a unique one-parameter family $w(x;k^2)$ 
($k^2\in\mathbb{R}$) of real solutions of 
the Painlev\'e IV equation \eqref{P4} determined by asymptotic condition
\begin{equation}
w(s) 
 \sim k^2 2^{(\alpha+2)/2}s^{\alpha-1}e^{-s^2}
\mbox{ \ as \ } s\to +\infty.
\end{equation}
Moreover, in the case $\beta=0$ and real $\alpha$, 
any solutions of the Painlev\'e IV equation \eqref{P4} with the 
boundary condition \eqref{boundaryCondition:ClarksonMcLeod} 
coincides with $w(x;k^2)$ for some constant $k^2\in\mathbb{R}$.
\end{thm}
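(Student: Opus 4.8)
The plan is to read \eqref{P4} with $\beta=0$ near $w=0$ as a nonlinear perturbation of a linear Weber (parabolic cylinder) equation and to construct the family by a fixed-point argument in a weighted space of exponentially decaying functions. The only obstruction to a naive linearization at $w=0$ is the term $(w')^2/(2w)$, which is singular there; I would remove it by the substitution $w=v^2$, valid on any interval where $w>0$ (the branch $w<0$ is handled identically by $w=-v^2$). A direct computation turns \eqref{P4} with $\beta=0$ into the regular equation
\begin{equation}
v''=(z^2-\alpha)\,v+2z\,v^3+\tfrac{3}{4}\,v^5,
\end{equation}
whose linear part $v''=(z^2-\alpha)v$ is a Weber equation. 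Its recessive solution is a parabolic cylinder function with the known behaviour $v\sim c\,z^{(\alpha-1)/2}e^{-z^2/2}$ as $z\to+\infty$, while its dominant solution grows like $z^{-(\alpha+1)/2}e^{+z^2/2}$. Squaring already reproduces the power and exponential factor $w\sim c^2\,z^{\alpha-1}e^{-z^2}$ in the statement.

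For existence I would fix the recessive mode. Using variation of parameters with the Green's function built from the recessive and dominant Weber solutions, I rewrite the equation as $v=k\,U+\mathcal{N}[v]$, where $U$ is the normalized recessive solution, $k$ the free coefficient, and $\mathcal{N}$ the integral operator coming from the cubic and quintic terms. On a weighted sup-norm ball of functions decaying like $U$, the map $v\mapsto k\,U+\mathcal{N}[v]$ is a contraction for each sufficiently small $k$ (because $\mathcal{N}$ is quadratically small in $v$), giving a unique decaying solution labelled by $k$; a continuation argument in $k$ then extends the family to all real parameter values. Matching $v\sim k\,U$ back through $w=v^2$ and inserting the explicit normalization of $U$ produces the one-parameter family together with the constant $2^{(\alpha+2)/2}$; the sign of the parameter selects the branch $w>0$ or $w<0$, which is why it enters as $k^2$.

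For the classification (``moreover'') part I would show that the condition \eqref{boundaryCondition:ClarksonMcLeod} forces exactly this asymptotic. First I rule out polynomial decay: for $w\to 0$ the term $2z^2 w$ in \eqref{P4} cannot be balanced by the nonlinear terms $\tfrac32 w^3+4zw^2$, so it must be balanced by $w''$ and $(w')^2/(2w)$, which forces exponential decay. Next, a zero of $w$ must be a double zero, since the residue of $(w')^2/(2w)$ at a simple zero cannot be cancelled while a double zero corresponds to a simple zero of $v=\sqrt{|w|}$; as the Weber-type equation for $v$ is non-oscillatory for large $z$ (the coefficient $z^2-\alpha$ is positive), $v$ cannot accumulate zeros at $+\infty$, so $w$ keeps a fixed sign there. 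The reduced $v$ is then a genuine decaying solution of the regular equation, and decay is incompatible with any admixture of the dominant Weber mode; hence $v$ is asymptotic to a multiple of $U$, and $w$ coincides with a member of the family for the corresponding $k^2$.

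The main obstacle is pinning down the exact connection constant $2^{(\alpha+2)/2}$: the fixed-point scheme yields the power $z^{\alpha-1}$ and the factor $e^{-z^2}$ effortlessly, but the precise multiplicative constant depends on the exact normalization of the recessive parabolic cylinder function together with the bookkeeping of the $w=v^2$ rescaling, and a fully rigorous determination is cleanest via the isomonodromy/Riemann--Hilbert analysis of \cite{ItsKapaev}, where the coefficient is read off from the Stokes data of the associated linear problem. A secondary technical point is the uniform treatment of the two sign branches and of solutions that traverse double zeros before settling to a definite sign near $+\infty$.
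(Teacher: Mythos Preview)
The paper does not prove this theorem. Theorem~\ref{thm:Clarkson-McLeodSolution} is stated with attributions to \cite{Abdullaye,BCHM,Lu} and is used as an input to the subsequent analysis of the asymptotics of $F_{2,N}(s)$; no argument for it appears anywhere in the paper. So there is no ``paper's own proof'' to compare your proposal against.

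That said, your sketch is mathematically on the right track and is close in spirit to what the cited references actually do. The substitution $w=v^2$ (with $\beta=0$) to obtain
\[
v''=(z^2-\alpha)v+2zv^3+\tfrac34 v^5
\]
is exactly the change of variable used in \cite{BCHM}, and the construction of the decaying family via an integral equation built from the recessive/dominant Weber pair is the core of their argument. Your identification of the recessive mode $v\sim c\,z^{(\alpha-1)/2}e^{-z^2/2}$ and hence $w\sim c^2 z^{\alpha-1}e^{-z^2}$ is correct. The two honest gaps you flag yourself are real: (i) the precise constant $2^{(\alpha+2)/2}$ requires a specific normalization of the parabolic cylinder function and careful bookkeeping, and (ii) your ``continuation in $k$'' to pass from small $k$ (where the contraction works) to all real $k^2$ is not automatic---it needs a global argument (\cite{BCHM} handle this via a priori bounds on the integral equation, and \cite{ItsKapaev} via Riemann--Hilbert analysis). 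Your classification argument is also only heuristic as written: the step ``decay is incompatible with any admixture of the dominant Weber mode'' is where the nonlinear analysis enters, and ruling out solutions that oscillate through double zeros before settling needs more than the non-oscillation of the linear part. None of this is fatal to the plan, but a complete proof along these lines is essentially the content of the cited papers rather than something one produces in a paragraph.
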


The parameters $k$, $\alpha$ 
for the solution of the Painlev\'e IV associated with $F_{2,N}(s)$ 
have been evaluated by Tracy and Widom 
(\cite{TracyWidom2}, p.55).
\begin{thm}[Tracy-Widom \cite{TracyWidom2}]
\label{thm:asymptotics}
$R(s):= \left\{\log F_{2,N}\right\}'$ is given explicitly by 
\eqref{PainleveExpression_F2N} 
where $w=w(z;k^2)$ is the Clarkson-McLeod solution (Theorem 
\ref{thm:Clarkson-McLeodSolution}) to the Painlev\'e IV, 
with the parameters $\alpha=2N-1$, $\beta=0$, 
and $k^2=2^{-3/2}\pi^{-1/2}/(N-1)!$.
\end{thm}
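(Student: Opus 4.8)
The plan is to pin down the constant $k^2$ by computing the $s\to+\infty$ asymptotics of one and the same quantity in two independent ways and matching the leading coefficients. By the construction of Section 3 the function $w=P(s)$ solves the Painlev\'e IV equation \eqref{P4} with $\alpha=2N-1$, $\beta=0$ and, by Theorem \ref{thm:F2N}, obeys $w(s)\to 0$ as $s\to+\infty$; hence by Theorem \ref{thm:Clarkson-McLeodSolution} it coincides with $w(s;k^2)$ for a single unknown $k^2\in\mathbb{R}$, and in particular
\[
w(s)\sim k^2\,2^{(2N+1)/2}\,s^{2N-2}e^{-s^2}\qquad(s\to+\infty),
\]
since $\alpha-1=2N-2$ and $(\alpha+2)/2=(2N+1)/2$. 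It therefore suffices to extract the leading asymptotic coefficient of $w(s)$ independently and equate.

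First I would compute the asymptotics of $R(s)=\{\log F_{2,N}\}'$ directly from the determinantal representation \eqref{detrep_F2N}. Writing $M(s)=\left[\int_{-\infty}^s x^{j+k-2}e^{-x^2}dx\right]_{j,k=1}^N$, the $s$-independent prefactor drops under $\tfrac{d}{ds}\log$, so $R(s)=\operatorname{tr}\!\big(M(s)^{-1}M'(s)\big)$, and the key observation is that $M'(s)_{jk}=s^{j+k-2}e^{-s^2}$ is rank one: $M'(s)=e^{-s^2}\mathbf{v}\mathbf{v}^{\mathsf{T}}$ with $\mathbf{v}=(1,s,\dots,s^{N-1})^{\mathsf{T}}$. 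Hence $R(s)=e^{-s^2}\,\mathbf{v}^{\mathsf{T}}M(s)^{-1}\mathbf{v}$. As $s\to+\infty$, $M(s)$ tends to the full Gaussian moment (Hankel) matrix $M_\infty$, and the top component $s^{N-1}$ of $\mathbf{v}$ dominates, giving $R(s)\sim (M_\infty^{-1})_{NN}\,s^{2N-2}e^{-s^2}$. By Cram\'er's rule $(M_\infty^{-1})_{NN}=\det M_\infty^{(N-1)}/\det M_\infty^{(N)}$, and the Hankel determinants $\det M_\infty^{(N)}$ are fixed by \eqref{normalization} through the Andr\'eief/de Bruijn identity \eqref{deBruijnIdentity} ($Z_N=N!\det M_\infty^{(N)}$). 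A short computation then yields $(M_\infty^{-1})_{NN}=2^{N-1}\pi^{-1/2}/(N-1)!$, so that $R(s)\sim \tfrac{2^{N-1}}{\sqrt{\pi}(N-1)!}\,s^{2N-2}e^{-s^2}$. (Equivalently one may use the Fredholm representation \eqref{FredholmDet_finiteN}, $R(s)\sim K_{2,N}(s,s)$, and evaluate the Christoffel--Darboux kernel on the diagonal using $H_N(s)\sim 2^N s^N$.)

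It remains to tie $R(s)$ to $w(s)$. From \eqref{H_intermsof_P} and \eqref{def:R(s)} (with $\theta_\infty=N$, $\theta_0=0$) one has
\[
R(s)=Nw-\frac{s^2}{2}w-\frac{s}{2}w^2-\frac{w^3}{8}+\frac{(w')^2}{8w}.
\]
Substituting the Clarkson--McLeod form $w\sim C s^{2N-2}e^{-s^2}$, so that $w'/w=\tfrac{2N-2}{s}-2s$, gives $\tfrac{(w')^2}{8w}=\tfrac{s^2}{2}w-(N-1)w+O(s^{-2}w)$. The $O(s^{2N})$ terms $-\tfrac{s^2}{2}w$ and $\tfrac{s^2}{2}w$ cancel exactly, and collecting the surviving $O(s^{2N-2})$ contribution leaves $R(s)\sim\big(N-(N-1)\big)w=w(s)$; the $w^2$, $w^3$ terms are $O(e^{-2s^2})$ and negligible. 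Comparing the two asymptotics for $w(s)$ then forces $k^2\,2^{(2N+1)/2}=\tfrac{2^{N-1}}{\sqrt{\pi}(N-1)!}$, i.e. $k^2=2^{-3/2}\pi^{-1/2}/(N-1)!$, as claimed. The delicate point — the only real obstacle — is precisely this leading-order cancellation between $-\tfrac{s^2}{2}w$ and $\tfrac{(w')^2}{8w}$, which forces one to retain the subleading term of the logarithmic derivative $w'/w$; everything else is routine asymptotic bookkeeping.
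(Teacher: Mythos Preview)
Your argument is correct and reaches the same constant $k^2$, but the route differs from the paper's in two places. For the leading asymptotics of $R(s)$, the paper writes $T_N(s)=\det[I_{i+j-2}-J_{i+j-2}]$, expands $J_n$ via the incomplete gamma function, and extracts the first correction to $\det[I_{i+j-2}]$; you instead differentiate first, observe that $M'(s)=e^{-s^2}\mathbf v\mathbf v^{\mathsf T}$ is rank one, and read off $R(s)=e^{-s^2}\mathbf v^{\mathsf T}M(s)^{-1}\mathbf v\sim (M_\infty^{-1})_{NN}\,s^{2N-2}e^{-s^2}$ via Cram\'er's rule. Both give $R(s)\sim \tfrac{2^{N-1}}{\sqrt\pi\,(N-1)!}\,s^{2N-2}e^{-s^2}$, and your rank-one observation is the cleaner way to see why only the ratio of successive Hankel determinants matters. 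For the passage between $R$ and $w$, the paper uses the formula $w=\dfrac{-R''-2sR'+2R}{2(R'+2N)}$ coming from \eqref{def:P(s)} and substitutes the $R$-asymptotics (after first recording the higher-order expansion of $R$ in a separate proposition), whereas you go the other way, inserting the Clarkson--McLeod ansatz for $w$ into \eqref{H_intermsof_P}; the same $\pm\tfrac{s^2}{2}w$ cancellation you flag appears in the paper's numerator as the cancellation of $\pm 4s^2R$, so the ``delicate point'' is common to both and handled correctly in each. Your version avoids the auxiliary higher-order expansion of $R$ and is a bit more economical.
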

Since no explicit proof of Theorem \ref{thm:asymptotics} 
has been written in \cite{TracyWidom2},  
we give a proof based on the determinant expression 
\eqref{detrep_F2N}. 
Toward this purpose, we consider two integrals
\begin{equation}
I_n:= \int_{-\infty}^{\infty} x^{n}e^{-x^2}dx,
\quad
J_n:= \int_{s}^{\infty} x^{n}e^{-x^2}dx
\quad (n=0,1,2,\ldots)
\end{equation}
and evaluate a determinant 
\begin{equation}
T_N(s) := \det\left[
\int_{-\infty}^s x^{i+j-2}e^{-x^2}dx
\right]_{i,j=1}^N
=\det\left[I_{m+n-2}-J_{m+n-2}\right]_{i,j=1}^N
\label{def:T_N(s)}
\end{equation}
for large $s$. Here $J_n$ can be represented in terms of 
the incomplete gamma function, 
\begin{equation}
\Gamma(a,z) = \int_z^{\infty} t^{a-1}e^{-t}dt, 
\end{equation}
as
\begin{equation}
J_n = \frac{1}{2}\Gamma\left((n+1)/2,s^2\right).
\end{equation}
Using the asymptotic series \cite{AbramowitzStegun}
\begin{equation}
\Gamma(a,z)\sim
z^{a-1}e^{-z}\left[1+\frac{a-1}{z}
+\cdots\right]
\quad \left(
\mbox{as }z\to\infty\mbox{ in }
\left|\mathrm{arg}z\right|<\frac{3\pi}{2}
\right), 
\end{equation}
we can estimate asymptotic behavior of $J_n(s)$ as
\begin{equation}
J_n \sim \frac{1}{2}s^{n-1}e^{-s^2}
\left[1+\frac{n-1}{2s^2}+\cdots\right]
\quad\left(\mbox{as }s\to\infty\right).
\label{Jn_asymptotics}
\end{equation}
Applying \eqref{Jn_asymptotics} to \eqref{def:T_N(s)}, 
we have
\begin{equation}
T_N(s)\sim \det\left[I_{i+j-2}\right]_{i,j=1}^N
-\frac{1}{2}\det\left[I_{i+j-2}\right]_{i,j=1}^{N-1}
s^{2N-3}e^{-s^2}
\quad (\mbox{as }s\to\infty).
\end{equation}
It follows from \eqref{normalization} and 
\eqref{deBruijnIdentity} that
\begin{equation}
\det\left[I_{i+j-2}\right]_{i,j=1}^N
= \frac{1}{N!}Z_N
= 2^{-N(N-1)/2}\pi ^{N/2}\prod_{j=1}^{N-1} j!, 
\end{equation}
and thus
\begin{equation}
T_N(s)\sim \frac{Z_N}{N!}
-\frac{1}{2}\frac{Z_{N-1}}{(N-1)!}
s^{2N-3}e^{-s^2}
\quad (\mbox{as }s\to\infty).
\label{leadingTerm_T_N}
\end{equation}
Applying \eqref{detrep_F2N}, \eqref{def:T_N(s)} and 
\eqref{leadingTerm_T_N} to \eqref{def:R(s)}, 
we obtain
\begin{align}
R_N(s) &= \frac{d}{ds}\log T_N(s) \nonumber\\
&\sim N \frac{Z_{N-1}}{Z_N}s^{2N-2}e^{-s^2} \quad (\mbox{as }s\to\infty)
\nonumber\\
&= \frac{2^{N-1}}{\pi^{1/2}(N-1)!}s^{2N-2}e^{-s^2}.
\end{align}

\begin{prop} Higher order expansion of $R(s)$ is given as
\begin{align}
R(s)\sim & \frac{2^{N-1}}{\pi^{1/2}(N-1)!}s^{2N-2}e^{-s^2}
\nonumber\\
&\times \left[
1-\frac{(N-1)(N-3)}{2s^2}+
\frac{(N-1)(N-2)(N^2-7N+15)}{8s^4}+\cdots
\right] \quad (\mbox{as }s\to\infty).
\label{R(s)_higher_asymptotics}
\end{align}
\end{prop}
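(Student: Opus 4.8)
The plan is to extract the asymptotics of $R(s)=\frac{d}{ds}\log T_N(s)$ by expanding the determinant $T_N(s)=\det[A-B(s)]$ to first order in the exponentially small quantity $e^{-s^2}$, where $A=[I_{i+j-2}]_{i,j=1}^N$ is the constant Hankel moment matrix and $B(s)=[J_{i+j-2}]_{i,j=1}^N$. Writing $\det[A-B]=\det A\cdot\det[I-A^{-1}B]$ and noting that every entry of $B$ is $O(e^{-s^2})$, any product of two or more entries of $B$ contributes only $O(e^{-2s^2})$; hence $\log T_N(s)=\log\det A-\mathrm{tr}\,(A^{-1}B)+O(e^{-2s^2})$ and, differentiating, $R(s)=-\frac{d}{ds}\mathrm{tr}\,(A^{-1}B)+O(e^{-2s^2})$. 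Since any power-law term $s^{-2k}e^{-s^2}$ dominates $O(e^{-2s^2})$, this first-order reduction already captures the entire bracket in \eqref{R(s)_higher_asymptotics}, and the only approximation at the stated level of precision is the replacement of $T_N$ by its linear term.

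The key simplification is an exact evaluation of $\mathrm{tr}\,(A^{-1}B)$ via the reproducing-kernel structure of the moment matrix. Since $B_{ij}=\int_s^\infty x^{i+j-2}e^{-x^2}dx$, one has $\mathrm{tr}\,(A^{-1}B)=\int_s^\infty \mathbf{v}(x)^\top A^{-1}\mathbf{v}(x)\,e^{-x^2}\,dx$ with $\mathbf{v}(x)=(1,x,\dots,x^{N-1})^\top$. I would then factor $A$ through the monic orthogonal polynomials $p_j(x)=2^{-j}H_j(x)$ for the weight $e^{-x^2}$: if $C$ is their unit lower-triangular coefficient matrix and $h_j=\int p_j^2 e^{-x^2}dx=2^{-j}j!\sqrt{\pi}$, then $CAC^\top=\mathrm{diag}(h_j)$, so $A^{-1}=C^\top\mathrm{diag}(h_j^{-1})C$ and
\begin{equation}
\mathbf{v}(x)^\top A^{-1}\mathbf{v}(x)=\sum_{j=0}^{N-1}\frac{p_j(x)^2}{h_j}
=\frac{1}{\sqrt{\pi}}\sum_{j=0}^{N-1}\frac{H_j(x)^2}{2^j j!}=:K_N(x,x),
\end{equation}
the diagonal of the Christoffel--Darboux kernel. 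Differentiating the resulting exact identity $\mathrm{tr}\,(A^{-1}B)=\int_s^\infty K_N(x,x)e^{-x^2}dx$ gives the clean formula $R(s)=K_N(s,s)e^{-s^2}+O(e^{-2s^2})$, reducing the whole problem to the large-$s$ expansion of the explicit even polynomial $K_N(s,s)$ of degree $2N-2$.

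It remains to expand $K_N(s,s)$ in descending powers of $s$. Its leading coefficient reproduces the prefactor $\frac{2^{N-1}}{\pi^{1/2}(N-1)!}$ already found for the leading term. To reach order $s^{-4}$ in the bracket I would collect the coefficients of $s^{2N-2}$, $s^{2N-4}$ and $s^{2N-6}$ in $K_N(s,s)$: the first comes solely from $H_{N-1}^2$, the second from the top two coefficients of $H_{N-1}^2$ together with the leading term of $H_{N-2}^2$, and the third from three coefficients of $H_{N-1}^2$, two of $H_{N-2}^2$, and the leading term of $H_{N-3}^2$. Using $H_n(x)=2^nx^n-n(n-1)2^{n-2}x^{n-2}+\frac{n(n-1)(n-2)(n-3)}{2}2^{n-4}x^{n-4}-\cdots$, squaring, summing, and dividing by the leading coefficient yields the bracketed coefficients $-\frac{(N-1)(N-3)}{2}$ and $\frac{(N-1)(N-2)(N^2-7N+15)}{8}$. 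As a check, $N=1,2,3$ give $K_N(s,s)\in\{\tfrac{1}{\sqrt{\pi}},\ \tfrac{2s^2+1}{\sqrt{\pi}},\ \tfrac{2s^4+3/2}{\sqrt{\pi}}\}$, consistent with the formula.

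I expect the main obstacle to be the bookkeeping in the last step: the $s^{2N-6}$ coefficient receives cancelling contributions from three consecutive Hermite squares, and verifying that these collapse to the compact factor $N^2-7N+15$ requires care. A secondary but genuine point is the analytic justification that the $O(e^{-2s^2})$ remainder and its derivative are uniformly negligible against each power-law term $s^{-2k}e^{-s^2}$, so that the first-order determinant expansion may be differentiated term by term; this follows from the analyticity of $T_N(s)$ and the controlled decay of the incomplete-gamma remainders in \eqref{Jn_asymptotics}.
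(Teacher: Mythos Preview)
Your approach is correct but genuinely different from the paper's. The paper does not revisit the determinant $T_N(s)$ at all: it simply observes that $R(s)$ satisfies the $\sigma$-form ODE
\[
(R'')^2-4(R-sR')^2+4(R')^2(R'+2N)=0,
\]
substitutes the ansatz $R(s)\sim s^{2N-2}e^{-s^2}\bigl(c_0+c_1s^{-1}+c_2s^{-2}+\cdots\bigr)$ with the already-determined leading constant $c_0$, and solves for the $c_k$ recursively. Your route instead exploits the orthogonal-polynomial structure of the moment matrix to obtain the closed identity $R(s)=K_N(s,s)\,e^{-s^2}+O(e^{-2s^2})$ with $K_N$ the Christoffel--Darboux kernel, and then reads off the coefficients from this explicit degree-$(2N-2)$ even polynomial. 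The ODE method is lighter on linear algebra and fits the bilinear/Painlev\'e theme of the paper; your method gives, in one stroke, the entire asymptotic series (all brackets at once rather than term by term) and makes transparent the connection with the Fredholm-determinant expression \eqref{FredholmDet_finiteN}, since $K_N(s,s)e^{-s^2}=K_{2,N}(s,s)$. Your small-$N$ checks and the justification that the $O(e^{-2s^2})$ remainder survives differentiation are both in order.
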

\begin{proof} 
We remark that $R(s)$ satisfies the differential equation 
\begin{equation}
\left(R''\right)^2-4\left(R-sR'\right)^2+4\left(R'\right)^2
\left(R'+2N\right)=0.
\label{R_N:sigma-form}
\end{equation}
Consider the asymptotic expansion 
\begin{equation}
R(s) \sim s^{2N-2}e^{-s^2}\left(
c_0 + c_1s^{-1} + c_2s^{-2} + c_3s^{-3}+\cdots
\right) 
\label{R_N:asympt_exp}
\end{equation}
with $c_0=2^{N-1}\pi^{-1/2}/(N-1)!$. 
Inserting \eqref{R_N:asympt_exp} into \eqref{R_N:sigma-form}, 
one can obtain $c_1$, $c_2$, $c_3$, $\ldots$, recursively, 
and the desirous result follows.
\end{proof}

\begin{proof}[Proof of Theorem \ref{thm:asymptotics}]
{}From \eqref{def:P(s)} and \eqref{def:R(s)}, we have
\begin{equation}
w(s) = \frac{-R''-2sR'+2R}{2\left(R'+2N\right)}.
\label{w_and_R}
\end{equation}
Inserting the asymptotic series \eqref{R(s)_higher_asymptotics} to
\eqref{w_and_R}, we have the desirous result.
\end{proof}

\section*{Appendices} 
\def\theequation{A.\arabic{equation}}
\setcounter{equation}{0}
\subsection*{Appendix A: Bilinear-theoretic derivation of \eqref{normalization}}
In \cite{MehtaBook}, the proof of the integral formula
\eqref{normalization} is given by using the Selberg integral.
Here we give another proof based on the determinant expression
\begin{equation}
Z_N = N! \det\left[I_{i+j-2}\right]_{i,j=1}^N,\quad
I_n = \int_{-\infty}^{\infty}x^n e^{-x^2}dx,
\end{equation}
which is a consequence of the identity \eqref{deBruijnIdentity}.
If $n$ is odd then $I_n=0$, and if $n$ is even, 
the integral $I_n$ is a special value of the Gamma function:
\begin{equation}
I_{2m} 
= \int_0^{\infty}t^{(2m-1)/2}e^{-t}dt
= \Gamma\left(m+\frac{1}{2}\right)
= \sqrt{\pi}\prod_{j=1}^m \left(m-\frac{1}{2}\right).
\end{equation}

We recall the Desnanot-Jacobi identity for determinants 
\cite{Bressoud,HirotaBook}
(or ``Lewis Carroll identity'', ``Dodgson condensation''). 
We consider a determinant $D$ of degree $N$ and its minors. 
We denote by $D\begin{bmatrix}i\\ j\end{bmatrix}$ the minor determinant 
obtained by deleting the $i$th row and the $j$th column, and 
by $D\begin{bmatrix}i_1 & i_2\\ j_1 & j_2\end{bmatrix}$ 
the minor obtained by deleting the $i_1$th and $i_2$th rows 
and the $j_1$th and $j_2$th columns.
\begin{thm}[Desnanot-Jacobi identity \cite{Bressoud,HirotaBook}]
For $i_1$, $i_2$, $j_1$, $j_2$ that satisfy 
$1\leq i_1<i_2\leq N$ and $1\leq j_1<j_2\leq N$, the following 
identity holds: 
\begin{equation}
D\cdot D\begin{bmatrix}i_1 & i_2\\ j_1 & j_2\end{bmatrix}
=D\begin{bmatrix}i_1\\ j_1\end{bmatrix}
D\begin{bmatrix}i_2\\ j_2\end{bmatrix}-
D\begin{bmatrix}i_1\\ j_2\end{bmatrix}
D\begin{bmatrix}i_2\\ j_1\end{bmatrix}.
\label{Desnanot-Jacobi}
\end{equation}
\end{thm}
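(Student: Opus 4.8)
The plan is to prove \eqref{Desnanot-Jacobi} first under the assumption that the underlying matrix $A$ (the $N\times N$ matrix whose determinant is $D$) is invertible, and then to remove this assumption by a density argument: every term in \eqref{Desnanot-Jacobi} is a polynomial in the $N^2$ entries of $A$, so an identity valid on the Zariski-dense open set $\{\det A\neq0\}$ holds identically. For invertible $A$ the natural tool is the relation between the minors of $A$ and the entries of its adjugate, combined with Jacobi's theorem on minors of the adjugate. The right-hand side of \eqref{Desnanot-Jacobi} is built from the bordered minors $D\begin{bmatrix}i\\ j\end{bmatrix}$, which are, up to sign, the cofactors of $A$, while the left-hand factor $D\begin{bmatrix}i_1 & i_2\\ j_1 & j_2\end{bmatrix}$ is precisely the complementary $(N-2)\times(N-2)$ minor. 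Jacobi's theorem is exactly the bridge linking a $2\times2$ minor of the adjugate to this complementary minor.

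Concretely I would proceed as follows. Recall that $D\begin{bmatrix}i\\ j\end{bmatrix}=(-1)^{i+j}C_{ij}$, where $C_{ij}$ is the $(i,j)$ cofactor of $A$, and that the cofactor matrix satisfies $C=\mathrm{adj}(A)^{\mathrm T}=\det(A)\,(A^{\mathrm T})^{-1}$. Substituting this into the right-hand side of \eqref{Desnanot-Jacobi}, both products carry the same factor $(-1)^{i_1+i_2+j_1+j_2}$, so that
\[
D\begin{bmatrix}i_1\\ j_1\end{bmatrix}D\begin{bmatrix}i_2\\ j_2\end{bmatrix}-D\begin{bmatrix}i_1\\ j_2\end{bmatrix}D\begin{bmatrix}i_2\\ j_1\end{bmatrix}=(-1)^{i_1+i_2+j_1+j_2}\bigl(C_{i_1j_1}C_{i_2j_2}-C_{i_1j_2}C_{i_2j_1}\bigr),
\]
and the parenthesised quantity is the $2\times2$ minor of the cofactor matrix $C$ on rows $\{i_1,i_2\}$ and columns $\{j_1,j_2\}$. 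Next I would apply the $k=2$ case of Jacobi's theorem to $C=\det(A)(A^{\mathrm T})^{-1}$: this $2\times2$ minor equals $\det A$ times the complementary minor of $A$ obtained by deleting rows $i_1,i_2$ and columns $j_1,j_2$, carrying the sign $(-1)^{i_1+i_2+j_1+j_2}$; that is,
\[
C_{i_1j_1}C_{i_2j_2}-C_{i_1j_2}C_{i_2j_1}=(-1)^{i_1+i_2+j_1+j_2}\,\det(A)\,D\begin{bmatrix}i_1 & i_2\\ j_1 & j_2\end{bmatrix}.
\]
Combining the two displays, the two occurrences of $(-1)^{i_1+i_2+j_1+j_2}$ cancel and $\det A=D$, which is exactly \eqref{Desnanot-Jacobi}.

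The key point — and the only genuine obstacle — is to pin down the sign in Jacobi's theorem and confirm it is the same $(-1)^{i_1+i_2+j_1+j_2}$ produced by converting minors to cofactors, so that the two annihilate. To make this transparent I would reduce to the corner configuration $i_1=j_1=1$, $i_2=j_2=N$: transposing row $i_1$ to the top and row $i_2$ to the bottom, and likewise for the columns $j_1,j_2$, each swap multiplies $D$ and the four bordered minors by matching signs, and one checks these balance on both sides of \eqref{Desnanot-Jacobi}. In the corner case the conversion sign is $(-1)^{1+N+1+N}=+1$, the cofactors reduce to $C_{11}=D\begin{bmatrix}1\\ 1\end{bmatrix}$, $C_{NN}=D\begin{bmatrix}N\\ N\end{bmatrix}$, $C_{1N}=(-1)^{N+1}D\begin{bmatrix}1\\ N\end{bmatrix}$, $C_{N1}=(-1)^{N+1}D\begin{bmatrix}N\\ 1\end{bmatrix}$, and the complementary minor is manifestly $D\begin{bmatrix}1 & N\\ 1 & N\end{bmatrix}$, so the Jacobi sign is $+1$ as well and the identity follows at once. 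If one prefers to avoid invoking the general theorem, this corner case can instead be verified by a short direct computation from $A^{-1}=\mathrm{adj}(A)/\det A$ (for instance through the Schur complement with respect to the inner $(N-2)\times(N-2)$ block), which is the one place where a brief calculation seems unavoidable.
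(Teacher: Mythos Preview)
The paper does not supply a proof of this theorem at all: the Desnanot--Jacobi identity is stated with citations to \cite{Bressoud,HirotaBook} and then used as a black box in Appendix~A and Appendix~B. So there is no ``paper's own proof'' to compare against.

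Your argument is correct and is essentially the classical derivation. The reduction of \eqref{Desnanot-Jacobi} to the $k=2$ case of Jacobi's theorem on minors of the adjugate is exactly the standard route (and is, in fact, how the references the paper cites obtain it). The sign bookkeeping is right: converting the four minors on the right-hand side to cofactors produces a common factor $(-1)^{i_1+i_2+j_1+j_2}$, and Jacobi's formula for the $2\times 2$ minor of the cofactor matrix produces the same sign, so the two cancel and leave $D\cdot D\begin{bmatrix}i_1 & i_2\\ j_1 & j_2\end{bmatrix}$. The density/Zariski argument to pass from invertible $A$ to arbitrary $A$ is sound, and your corner-case reduction is a legitimate (if somewhat redundant) way to nail down the sign in Jacobi's theorem without quoting it in full generality. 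Nothing is missing.
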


For $M=0,1,2,\dots$, $N=1,2,\ldots$, define $A^M_N$ as
\begin{equation}
A^M_N = \det\left[I_{M+i+j-2}\right]_{i,j=1}^N.
\end{equation}
{}From the Desnanot-Jacobi identity \eqref{Desnanot-Jacobi}, 
we have
\begin{equation}
A^M_{N+2}A^{M+2}_{N}=A^M_{N+1}A^{M+2}_{N+1}-\left(A^{M+1}_{N+1}\right)^2.
\end{equation}
Setting $M=2m$ and $N=2n$, we obtain
\begin{equation}
A^{2m}_{2n+2}A^{2m+2}_{2n} = A^{2m}_{2n+1}A^{2m+2}_{2n+1}
\label{recursion1}
\end{equation}
where we have used $A^{2m+1}_{2n+1}=0$. Similarly we have
\begin{align}
&A^{2m}_{2n+1}A^{2m+2}_{2n-1}=
A^{2m}_{2n}A^{2m+2}_{2n}-\left(A^{2m+1}_{2n}\right)^2,
\label{recursion2}\\
&A^{2m-1}_{2n+2}A^{2m+1}_{2n}=
-\left(A^{2m}_{2n+1}\right)^2,
\label{recursion3}\\
& 
0=A^{2m-1}_{2n}A^{2m+1}_{2n}-\left(A^{2m}_{2n}\right)^2.
\label{recursion4}
\end{align}
The values of $A^M_N$ ($M,N=0,1,2,\ldots$) are determined 
by the recursion relations \eqref{recursion1}, \eqref{recursion2}, 
\eqref{recursion3}, \eqref{recursion4}, together with the initial
conditions 
\begin{equation}
 A^M_0 = 1, \quad A^M_1 = I_M \quad (M=0,1,2,\ldots).
\label{inii_cond}
\end{equation}

\begin{prop}
\label{prop:A^M_N}
Define $P(l;m,n)$ as 
\begin{equation}
P(l;m,n) = 
\begin{cases}
1 & (m=0, n=1,2,\ldots)\\
\prod_{i=1}^m \prod_{j=1}^n
\left(l+i+j-\frac{3}{2}\right) & (m,n=1,2,\ldots)
\end{cases}
\end{equation}
Then $A^M_N$ is expressed in terms of $P(l;m,n)$ as
\begin{align}
A^{2m}_{2n} &= \pi^n P(0;m,n)P(1;m,n)\prod_{j=1}^{2n-1}\frac{j!}{2},
\label{A_even_even}\\
A^{2m}_{2n+1} &=
 \pi^{(2n+1)/2}P(0;m,n+1)P(1;m,n)\prod_{j=1}^{2n}\frac{j!}{2}, 
\label{A_even_odd}\\
A^{2m+1}_{2n} &= (-\pi)^n P(1;m,n)^2 \prod_{j=0}^{n-1}\left\{
\frac{(2j+1)!}{2^{2j+1}}\right\}^2, 
\label{A_odd_even}\\
A^{2m+1}_{2n+1} &= 0 \qquad (m,n=0,1,2,\ldots).
\label{A_odd_odd}
\end{align}
\end{prop}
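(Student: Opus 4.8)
The plan is to prove the four formulas \eqref{A_even_even}--\eqref{A_odd_odd} simultaneously by induction on the matrix size $N$, using the recursion relations \eqref{recursion1}--\eqref{recursion4} as a scheme that manufactures the size-$N$ determinants out of those of sizes $N-1$ and $N-2$, anchored by the initial conditions \eqref{inii_cond}. The base cases are immediate, with the convention that empty products equal $1$: for $N=0$ every $A^M_0=1$, and the even-size formulas \eqref{A_even_even}, \eqref{A_odd_even} collapse to empty products equal to $1$; for $N=1$ one has $A^M_1=I_M$, and substituting $n=0$ into \eqref{A_even_odd} and \eqref{A_odd_odd} reproduces $A^{2m}_1=\sqrt{\pi}\,P(0;m,1)=I_{2m}$ and $A^{2m+1}_1=0=I_{2m+1}$.

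First I would dispose of the vanishing \eqref{A_odd_odd} structurally, independently of the induction. When $M$ is odd the entry $I_{M+i+j-2}$ vanishes unless $i+j$ is odd, so in an odd-sized matrix ($N=2n+1$) the $n+1$ columns with odd index have nonzero entries only in the $n$ rows with even index; these columns therefore lie in an $n$-dimensional space and are linearly dependent, forcing $A^{2m+1}_{2n+1}=0$. This is precisely the identity invoked in passing from the Desnanot--Jacobi identity \eqref{Desnanot-Jacobi} to \eqref{recursion1} and \eqref{recursion3}.

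For the inductive step, assume \eqref{A_even_even}--\eqref{A_odd_even} hold for all sizes below $N$. The three nontrivial parity classes are filled in from strictly smaller layers as follows. If $N=2n+1$, the even-$M$ values $A^{2m}_{2n+1}$ are read off from \eqref{recursion2} after dividing by the even-$M$ determinant $A^{2m+2}_{2n-1}$; if $N=2n+2$, the even-$M$ values $A^{2m}_{2n+2}$ come from \eqref{recursion1} upon dividing by $A^{2m+2}_{2n}$, and the odd-$M$ values $A^{2m-1}_{2n+2}$ from \eqref{recursion3} upon dividing by the determinant $A^{2m+1}_{2n}$ of \eqref{A_odd_even}. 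In each case the divisor is either an even-$M$ determinant or one of type \eqref{A_odd_even}, which is manifestly nonzero since every factor of $P(l;m,n)$ is a product of positive half-integers; the remaining relation \eqref{recursion4} then serves as an automatic consistency check. Thus it remains only to verify that the right-hand sides of \eqref{recursion1}, \eqref{recursion2}, \eqref{recursion3}, evaluated on the closed forms of the lower layers, simplify to the closed form of the new layer.

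That verification is the computational heart of the proof, and it reduces entirely to the two elementary shift identities $P(l;m,n+1)/P(l;m,n)=\prod_{i=1}^m\left(l+i+n-\tfrac12\right)$ and $P(l;m+1,n)/P(l;m,n)=\prod_{j=1}^n\left(l+m+j-\tfrac12\right)$, read off directly from the definition of $P(l;m,n)$, together with the elementary factorizations of the accompanying factorial and power-of-two prefactors. The main obstacle is purely the bookkeeping: one must track how the shifts in $m$ and $n$ redistribute the half-integer factors between the $P(0;\cdot)$ and $P(1;\cdot)$ blocks, and how the telescoping of the prefactors absorbs the leftover powers of $\pi$ and the factorials, so that each recursion collapses to a single rational identity in $m$ and $n$. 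The one point to watch is the sign in \eqref{recursion3}; it comes out correctly because the product $(-\pi)^{n+1}(-\pi)^{n}=-\pi^{2n+1}$ arising from the two factors of \eqref{A_odd_even} on the left supplies exactly the minus sign carried by $-\left(A^{2m}_{2n+1}\right)^2$ on the right.
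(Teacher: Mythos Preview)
Your proposal is correct and follows essentially the same route as the paper: the paper simply observes that the values $A^M_N$ are uniquely determined by the recursions \eqref{recursion1}--\eqref{recursion4} together with the initial data \eqref{inii_cond}, and then declares that the claimed formulas satisfy all of these by ``straightforward calculation.'' Your write-up is a more explicit unpacking of that same verification---you spell out the inductive scheme, identify which recursion produces each parity class, check that the required divisors are nonzero, and supply a direct rank argument for the vanishing $A^{2m+1}_{2n+1}=0$---but the underlying strategy is identical.
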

\begin{proof}
It is enough to show that 
\eqref{A_even_even}, \eqref{A_even_odd}, \eqref{A_odd_even} satisfy 
the recursion relations \eqref{recursion1}, \eqref{recursion2}, 
\eqref{recursion3}, \eqref{recursion4}, and the initial
conditions \eqref{inii_cond}, that can be checked by straightforward
calculation.
\end{proof}
The formula \eqref{normalization} follows from a special case of Proposition 
\ref{prop:A^M_N}: 
\begin{equation}
Z_N = N! A^0_N = N!\cdot \pi^{N/2}\prod_{j=1}^{N-1}\frac{j!}{2} 
= 2^{-N(N-1)/2}\pi ^{N/2}\prod_{j=1}^{N} j!.
\end{equation}

\def\theequation{B.\arabic{equation}}
\setcounter{equation}{0}
\subsection*{Appendix B: Proof of Theorem \ref{thm:2d-Wr}}
In this appendix, we show that the two-directional Wronskian 
\eqref{2D-Wr} actually satisfies the bilinear equations 
\eqref{bilinearEq4} and \eqref{bilinearEq5} with 
$\tau=\tau_N(t_1,t_2)$, $\rho_1=\tau_{N+1}(t_1,t_2)$ 
$\rho_2=\tau_{N-1}(t_1,t_2)$. 
We first introduce
the following notation, which is a generalization of 
the abbreviated notation due to Freeman and Nimmo 
\cite{FreemanNimmo,HirotaBook,HOS,NimmoZhao}:
\begin{equation}
\begin{vmatrix}
m_1, m_2, \ldots, m_N\\
n_1, n_2, \ldots, n_N
\end{vmatrix}:=
\det\left[\partial_1^{m_i+n_j} f(t_1,t_2)\right]_{i,j=1}^N.
\end{equation}
The $\tau$-function \eqref{2D-Wr} and its derivatives are 
written as 
\begin{align}
\tau_N &= \begin{vmatrix}
0,\dots,N-1\\
0,\dots,N-1
\end{vmatrix},
\label{DiffRel1}\\
\partial_1\tau_N &
= \begin{vmatrix}
0,\dots,N-1\\
0,\dots,N-2,N
\end{vmatrix} = \begin{vmatrix}
0,\dots,N-2,N\\
0,\dots,N-1
\end{vmatrix},
\label{DiffRel2}\\
\partial_1^2\tau_N 
&= \begin{vmatrix}
0,\dots,N-2,N\\
0,\dots,N-2,N
\end{vmatrix}
=\begin{vmatrix}
0,\dots,N-3,N-1,N\\
0,\dots,N-1
\end{vmatrix}
+\begin{vmatrix}
0,\dots,N-2,N+1\\
0,\dots,N-1
\end{vmatrix}.
\label{DiffRel3}
\end{align}
Using \eqref{DispersionRelation}, we have
\begin{align}
\partial_2\tau_N 
&=-\begin{vmatrix}
0,\dots,N-3,N-1,N\\
0,\dots,N-1
\end{vmatrix}
+\begin{vmatrix}
0,\dots,N-2,N+1\\
0,\dots,N-1
\end{vmatrix},
\label{DiffRel5}\\
\partial_1\partial_2\tau_N 
&=-\begin{vmatrix}
0,\dots,N-3,N-1,N\\
0,\dots,N-2,N
\end{vmatrix}
+\begin{vmatrix}
0,\dots,N-2,N+1\\
0,\dots,N-2,N
\end{vmatrix},
\label{DiffRel6}\\
\partial_2^2\tau_N 
&=\begin{vmatrix}
0,\dots,N-3,N-1,N\\
0,\dots,N-3,N-1,N
\end{vmatrix}
\nonumber\\
&\qquad -2\begin{vmatrix}
0,\dots,N-3,N-1,N\\
0,\dots,N-2,N+1
\end{vmatrix}
+\begin{vmatrix}
0,\dots,N-2,N+1\\
0,\dots,N-2,N+1
\end{vmatrix}.
\label{DiffRel7}
\end{align}

Using the differential formulas \eqref{DiffRel2} and the identity 
\eqref{Desnanot-Jacobi}, we obtain
\begin{align}
\left(\partial_1\tau_{N+1}\right)\tau_{N-1} &=
\begin{vmatrix}0,\ldots,N-1,N+1\\ 0,\ldots,N\end{vmatrix}
\begin{vmatrix}0,\ldots,N-2\\ 0,\ldots,N-2\end{vmatrix}
\nonumber\\
&=\begin{vmatrix}0,\ldots,N-2,N+1\\ 0,\ldots,N-2,N\end{vmatrix}
\begin{vmatrix}0,\ldots,N-1\\ 0,\ldots,N-1\end{vmatrix}
\nonumber\\
&\qquad -\begin{vmatrix}0,\ldots,N-2,N+1\\ 0,\ldots,N-1\end{vmatrix}
\begin{vmatrix}0,\ldots,N-1\\ 0,\ldots,N-2,N\end{vmatrix},
\label{proof_blinearEq4_1}
\\
\tau_{N+1}\left(\partial_1\tau_{N-1}\right)&=
\begin{vmatrix}0,\ldots,N\\ 0,\ldots,N\end{vmatrix}
\begin{vmatrix}0,\ldots,N-3,N-1\\ 0,\ldots,N-2\end{vmatrix}
\nonumber\\
&=\begin{vmatrix}0,\ldots,N-3,N-1,N\\ 0,\ldots,N-2,N\end{vmatrix}
\begin{vmatrix}0,\ldots,N-1\\ 0,\ldots,N-1\end{vmatrix}
\nonumber\\
&\qquad -\begin{vmatrix}0,\ldots,N-3,N-1,N\\ 0,\ldots,N-1\end{vmatrix}
\begin{vmatrix}0,\ldots,N-1\\ 0,\ldots,N-2,N\end{vmatrix}
\label{proof_blinearEq4_2}
\end{align}
It is straightforward to show that $\tau=\tau_N$, $\rho_1=\tau_{N+1}$, 
$\rho_2=\tau_{N-1}$ solve \eqref{bilinearEq4} by using 
\eqref{proof_blinearEq4_1}, \eqref{proof_blinearEq4_2}, 
and the differential relations 
\eqref{DiffRel2}, \eqref{DiffRel5}, \eqref{DiffRel6}.

We use the Desnanot-Jacobi identity \eqref{Desnanot-Jacobi} 
again to obtain 
\begin{align}
\left(\partial_1^2\tau_{N+1}\right)\tau_{N-1}&=
\begin{vmatrix}
0,\ldots,N-1,N+1\\ 0,\ldots,N-1,N+1
\end{vmatrix}
\begin{vmatrix}
0,\ldots,N-2\\ 0,\ldots,N-2
\end{vmatrix}
\nonumber\\
&=\begin{vmatrix}
0,\ldots,N-2,N+1\\ 0,\ldots,N-2,N+1
\end{vmatrix}
\begin{vmatrix}
0,\ldots,N-1\\ 0,\ldots,N-1
\end{vmatrix}
\nonumber\\
&\qquad -\begin{vmatrix}
0,\ldots,N-2,N+1\\ 0,\ldots,N-1
\end{vmatrix}
\begin{vmatrix}
0,\ldots,N-1\\ 0,\ldots,N-2,N+1
\end{vmatrix},
\\
\tau_{N+1}\left(\partial_1^2\tau_{N-1}\right)
&=\begin{vmatrix}
0,\ldots,N\\ 0,\ldots,N
\end{vmatrix}
\begin{vmatrix}
0,\ldots,N-3,N-1\\ 0,\ldots,N-3,N-1
\end{vmatrix}
\nonumber\\
&=\begin{vmatrix}
0,\ldots,N-3,N-1,N\\ 0,\ldots,N-3,N-1,N
\end{vmatrix}
\begin{vmatrix}
0,\ldots,N-1\\ 0,\ldots,N-1
\end{vmatrix}
\nonumber\\
&\qquad -\begin{vmatrix}
0,\ldots,N-3,N-1,N\\ 0,\ldots,N-1
\end{vmatrix}
\begin{vmatrix}
0,\ldots,N-1\\ 0,\ldots,N-3,N-1,N
\end{vmatrix},
\\
\left(\partial_1\tau_{N+1}\right)\left(\partial_1\tau_{N-1}\right)
&=\begin{vmatrix}
0,\ldots,N-1,N+1\\ 0,\ldots,N
\end{vmatrix}
\begin{vmatrix}
0,\ldots,N-2\\ 0,\ldots,N-3,N-1
\end{vmatrix}
\nonumber\\
&=\begin{vmatrix}
0,\ldots,N-2,N+1\\ 0,\ldots,N-3,N-1,N
\end{vmatrix}
\begin{vmatrix}
0,\ldots,N-1\\ 0,\ldots,N-1
\end{vmatrix}
\nonumber\\
&\qquad -\begin{vmatrix}
0,\ldots,N-2,N+1\\ 0,\ldots,N-1
\end{vmatrix}
\begin{vmatrix}
0,\ldots,N-1\\ 0,\ldots,N-3,N-1,N
\end{vmatrix}.
\end{align}
These relations give \eqref{bilinearEq5}
with $\tau=\tau_N$, $\rho_1=\tau_{N+1}$, and $\rho_2=\tau_{N-1}$.

\def\theequation{C.\arabic{equation}}
\setcounter{equation}{0}
\subsection*{Appendix C: Bilinearization of the Painlev\'e IV}
As in Proposition \ref{Prop:sigma-form_P4}, 
under the similarity conditions 
\eqref{similarity:tau}--\eqref{similarity:rho2}, 
one can introduce $F(s)$ as \eqref{def:F(s)} and 
$G_1(s)$, $G_2(s)$ as 
\begin{equation}
\rho_1(t_1,t_2) =(2\sqrt{-t_2})^{-c_1}G_1(s), \quad
\rho_2(t_1,t_2) =(2\sqrt{-t_2})^{-c_2}G_2(s).
\end{equation}
Bilinear equations for $F(s)$, $G_1(s)$ and $G_2(s)$ 
follow from \eqref{bilinearEq2}--\eqref{bilinearEq5}:
\begin{align}
& \left(D_s^2 +2s D_s -c_1+c_2-2\right) F\cdot G_1=0,\\
& \left(D_s^2 +2s D_s -c_1+c_2+2\right) G_2\cdot F=0,\\
& D_s^2 F\cdot F = G_1 G_2, \\
& \left(s D_s^2 +\frac{d}{ds}\right)F\cdot F = D_s G_1\cdot G_2,\\
& \left(2s^2 D_s^2+4(c_1+c_2-2)+6s\frac{d}{ds}\right)F\cdot F
= D_s^2 G_1\cdot G_2.
\end{align}
These give a bilinearization of the Painlev\'e IV, however, which are 
different from that used in \cite{HW,KO_P4,NoumiYamada}.

\section*{Acknowledgments}
The author acknowledges Takashi Imamura, Makoto Katori, and 
Tomohiro Sasamoto for discussions on the Tracy-Widom distributions.
Thanks are also due to Kenji Kajiwara and Ralph Willox for information 
on bilinear forms of the Painlev\'e equations. 
This work is partially supported by JSPS Grant-in-Aid for 
Scientific Research No. 23540252.


\begin{thebibliography}{90}
\bibitem{Abdullaye}
A.S. Abdullaye, 
Justification of asymptotic formulas for the fourth
Painlev\'e equation, 
\textit{Stud. Appl. Math.} \textbf{99} (1997), 255--283.

\bibitem{ARS}
M.J. Ablowitz, A. Ramani and H. Segur, 
A connection between nonlinear evolution equations and ordinary
differential equations of P-type. II,  
\textit{J. Math. Phys.} \textbf{21} (1980), 1006--1015.

\bibitem{ASBook}
M.J. Ablowitz and H. Segur,
\textit{Solitons and the inverse scattering transform},
SIAM, Philadelphia, 1981.

\bibitem{Andreief}
C. Andr\'eief, 
Note sur une relation les int\'egrales d\'efinies des produits des
fonctions, 
\textit{M\'em. de la Soc. Sci., Bordeaux} \textbf{2} (1883), 1--14.

\bibitem{AbramowitzStegun}
P.J. Davis, Gamma function and related functions, 
In \textit{Handbook of mathematical functions with formulas: 
graphs, and mathematical tables}, 
M. Abramowitz and I.A. Stegun (eds.), 
Dover, New York, 1965.

\bibitem{BCHM}
A.P. Bassom, P.A. Clarkson, A.C. Hicks and J.B. McLeod, 
Integral equations and exact solutions for the fourth Painlev\'e
equation, 
\textit{Proc. R. Soc. Lond.} \textbf{A437} (1992), 1--24.

\bibitem{BCH}
A.P. Bassom, P.A. Clarkson and A.C. Hicks,
On the application of solutions of the fourth Painlev\'e equation to
various physically motivated nonlinear partial differential
equations, 
\textit{Adv. Differential Equations}, 
\textbf{1} (1996), 175--198. 

\bibitem{Bressoud}
D.M. Bressoud, 
\textit{Proofs and confirmations: the story of the 
alternating sign matrix conjecture}, 
MAA: Spectrum Series, Cambridge University Press, 1999.

\bibitem{deBruijn}
N.G. de Bruijn, 
On some multiple integrals involving determinants, 
\textit{J. Indian Math. Soc.} \textbf{19} (1955), 133--151.
\bibitem{Can}
M. Can, 
On the relation between nonlinear Schr\"odinger 
equation and Painlev\'e IV equation, 
Il Nuovo Cimento \textbf{106} (1991), 205--207.

\bibitem{Clarkson}
P.A. Clarkson, 
The fourth Painlev\'e transcendent, 
In \textit{Differential Algebra and Related Topics II}, 
Li Guo and W.Y. Sit (eds.), World Scientific, Singapore (2008).
Also available at 
http://kar.kent.ac.uk/id/eprint/23090

\bibitem{ClarksonMcLeod}
P.A. Clarkson and J.B. McLeod, 
Integral equations and connection formulae for the
Painlev\'e equations, 
In \textit{Painlev\'e Transcendents: Their Asymptotics and Physical
Applications}, P. Winternitz and D. Levi (eds.), 
NATO Adv. Sci. Inst. Ser. B Phys., vol. 278, Plenum, 
New York, 1992, pp. 1–31.

\bibitem{FreemanNimmo}
N.C. Freeman and J.J.C. Nimmo, 
Soliton solutions of the Korteweg-de Vries and Kadomtsev-Petviashvili
equations: the Wronskian technique, 
\textit{Phys. Lett. A} \textbf{95} (1983), 1--3.

\bibitem{Forrester}
P.J. Forrester,
The spectrum edge of random matrix ensembles,
\textit{Nuclear Physics} \textbf{B402} (1993), 709--728.

\bibitem{ForresterWitte}
P.J. Forrester and N.S. Witte, 
Application of the $\tau$-function theory of Painlev\'e
equations to random matrices: PIV, PII and the GUE, 
\textit{Commun. Math. Phys.} \textbf{219} (2001), 357--398.

\bibitem{FPNFD}
M. Fridman, R. Pugatch, M. Nixon, A.A. Friesem, and N. Davidson, 
Measuring maximal eigenvalue distribution of Wishart random matrices
with coupled lasers, 
\textit{Phys. Rev. E} \textbf{85} (2012), R020101.

\bibitem{HM}
S.P. Hastings and J.B. McLeod, 
A boundary value problem associated with the second Painlev\'e
transcendent and the Korteweg-de Vries equation,
\textit{Arch. Rat. Mech. Anal.} \textbf{73} (1980), 31--51. 

\bibitem{HW}
J. Hietarinta and R. Willox, 
On the bilinear forms of Painlev\'e equations, 
in \textit{Bilinear Integrable Systems: from Classical to Quantum,
Continuous to Discrete}, 
L.D. Fadeev and P. van Moerbeke (eds.), 
Springer, 2006, pp. 375--390.

\bibitem{HirotaBook}
R. Hirota, 
\textit{The direct method in soliton theory},
Cambridge Tracts in Mathematics, Vol. 155, 
Cambridge University Press, Cambridge, 2004. 

\bibitem{HOS}
R. Hirota, Y. Ohta and J. Satsuma,
Wronskian structures of solutions for soliton equations,
\textit{Prog. Theor. Phys. Supplement} \textbf{94} (1988), 59--72.

\bibitem{ItsKapaev}
A.R. Its and A.A. Kapaev, 
Connection formulae for the fourth Painlev\'e transcendent;
Clarkson-McLeod solution, 
\textit{J. Phys. A: Math. Gen.} \textbf{31} (1998), 4073--4113.

\bibitem{JM}
M. Jimbo and T.  Miwa, 
Monodromy preserving deformations of linear ordinary differential
equations with rational coefficients II, 
\textit{Physica} \textbf{2D} (1981), 407--448.

\bibitem{KO_P4}
K. Kajiwara and Y. Ohta, 
Determinant structure of the rational solutions for the 
Painlev\'e IV equation, 
\textit{J. Phys. A: Math. Gen.} \textbf{31} (1998), 
2431--2446.

\bibitem{KakeiKikuchi}
S. Kakei and T. Kikuchi, 
Affine Lie group approach to a derivative nonlinear Schr\"odinger 
equation and its similarity reduction, 
\textit{International Mathematics Research Notices} \textbf{2004:78} 
(2004), 4181--4209.

\bibitem{Kitaev}
A.V. Kitaev, 
Self-similar solutions of the modified nonlinear Schr\"odinger equation, 
\textit{Theor. Math. Phys.} \textbf{64} (1985), 878-894.

\bibitem{Lu}
Y. Lu, 
On the asymptotic representation of the solutions to the fourth general
Painlev\'e equation, 
\textit{Int. J. Math. Math. Sci.} \textbf{2003(13)} (2003), 845--851.

\bibitem{MehtaBook}
M.L. Mehta, \textit{Random matrices}, 3rd edition, Elsevier, 2004.

\bibitem{MMMT}
L. Miettinen, M. Myllys, J. Merikoski and J. Timonen,
Experimental determination of KPZ height-fluctuation distributions, 
\textit{Eur. Phys. J.} \textbf{B46} (2005), 55--60.

\bibitem{Nakach}
R. Nakach, 
Self-similar solutions of nonlinear evolution equations
of physical significance, 
in \textit{Plasma Physics: Nonlinear Theory and Experiments}, 
H. Wilhelmsson (ed.), Nobel Foundation Symposia, Vol. 36, 
Springer Science+Buisiness Media, New York, 1977, pp. 456--474.

\bibitem{NimmoZhao}
J.J.C. Nimmo and J. Zhao, 
Determinant and Pfaffian solutions of soliton equations, 
\textit{Phys. Scr.} \textbf{89} (2014), 038005.

\bibitem{NoumiYamada}
M. Noumi and Y. Yamada,
Symmetries in the fourth Painlev\'e equation and Okamoto polynomials,
\textit{Nagoya Math. J.} 
\textbf{153} (1999), 53--86.

\bibitem{Okamoto}
K. Okamoto, 
Studies on the Painlev\'e equations III. 
Second and fourth Painlev\'e equations, $P_{II}$ and $P_{IV}$, 
\textit{Math. Ann.} \textbf{275} (1986), 221--255.

\bibitem{TakeuchiSano}
K.A. Takeuchi and M. Sano, 
Universal fluctuations of growing interfaces: evidence in turbulent
liquid crystals, 
\textit{Phys. Rev. Lett.} \textbf{104} (2010), 230601. 
\bibitem{TakeuchiReview}
K.A. Takeuchi, 
Experimental approaches to universal out-of-equilibrium scaling laws:
turbulent liquid crystal and other developments, 
\textit{J. Stat. Mech.} (2014), P01006

\bibitem{TracyWidom1}
C.A. Tracy and H. Widom, 
Level-spacing distributions and the Airy kernel, 
\textit{Commun. Math. Phys.} \textbf{159} (1994), 151--174.
\bibitem{TracyWidom2}
C.A. Tracy and H. Widom, 
Fredholm determinants, differential equations and matrix models, 
\textit{Commun. Math. Phys.} \textbf{163} (1994), 33--72.
\bibitem{TracyWidom3}
C.A. Tracy and H. Widom,
Correlation functions, cluster functions and spacing distributions
for random matrices, 
\textit{J. Stat. Phys.} \textbf{92} (1998), 809--835.
\bibitem{TracyWidomICM}
C.A. Tracy and H. Widom, 
Distribution functions for largest eigenvalues and their applications, 
in \textit{Proceedings of the International Congress of Mathematicians}, 
Vol. I (Beijing, 2002), 
Higher Education Press, Beijing, 2002, pp. 587--596.

\bibitem{TulinoVerdu}
A.M. Tulino and S. Verdu, 
\textit{Random matrix theory and wireless communications} 
(Foundations and Trends in Communications and Information Theory) 
Now Publishers, 2004.

\bibitem{YLSBDY}
P.J. Yunker, M.A. Lohr, T. Still, A. Borodin, D.J. Durian, and A.G. Yodh,
Effects of particle shape on growth dynamics at edges of evaporating
drops of colloidal suspensions, 
\textit{Phys. Rev. Lett.} \textbf{110} (2013), 035501.
\end{thebibliography}
\end{document}